\documentclass[journal]{IEEEtran}
\usepackage{amsmath,amsfonts,amssymb}
\usepackage{amsthm}
\usepackage{romannum}
\usepackage{algorithmic}
\usepackage{array}
\usepackage[caption=false,font=normalsize,labelfont=sf,textfont=sf]{subfig}
\usepackage{textcomp}
\usepackage{xcolor}
\usepackage{stfloats}
\usepackage{url}
\usepackage{verbatim}
\usepackage{bm}
\usepackage{graphicx}
\usepackage{algorithm}
\usepackage{cite}
\usepackage{orcidlink}
\usepackage{balance}

\usepackage{hyperref}
\hypersetup{hidelinks}

\newtheorem{lemma}{Lemma}[section]
\newtheorem{theorem}{Theorem}[section]

\def\BibTeX{{\rm B\kern-.05em{\sc i\kern-.025em b}\kern-.08em
T\kern-.1667em\lower.7ex\hbox{E}\kern-.125emX}}

\begin{document}
\title{Semantic Communication for 6G Networks: A Trade-off between Distortion Criticality and Information Representability}
\author{Faizan Shafi \orcidlink{0009-0004-6632-9746} 
\and, 
Rahul Jashvantbhai Pandya \orcidlink{0000-0002-3259-817X} 
\and, 
Christo Kurisummoottil Thomas \orcidlink{0000-0002-0926-5211} 
\and,
Sridhar Iyer \orcidlink{0000-0002-8466-3316}
\vspace{-8mm}

\thanks{

(\textit{Corresponding author: Faizan Shafi})

Faizan Shafi and Rahul Jashvantbhai Pandya are with the Department of Electrical, Electronics and Communications Engineering, Indian Institute of Technology Dharwad, Dharwad, Karnataka 580011, India (e-mail:
ee23dp004@iitdh.ac.in; rpandya@iitdh.ac.in). Christo Kurisummoottil Thomas is with the Department of Electrical and Computer Engineering, Worcester Polytechnic Institute, MA, USA (e-mail: cthomas2@wpi.edu).
Sridhar Iyer is with the Department of ECE, SGBIT, Belagavi, Karnataka 590010, India (e-mail: sridhariyer@sgbit.edu.in).

This work was supported in part by the Department of Telecommunication (DoT), Ministry of Communications, Government of India under the Telecom Technology Development Fund (TTDF) the scheme implemented through TCOE India under the grant TTDF/6G/48 and IGSTC-04918.}}

\maketitle
\thispagestyle{empty}
\pagenumbering{arabic}
\begin{abstract}
In this work, a self-attention based conditional generative adversarial network (SA–cGAN) framework for the sixth generation (6G) semantic communication system is proposed, explicitly designed to balance the trade-off between distortion criticality and information representability under varying channel conditions. The proposed SA–cGAN model continuously learns compact semantic representations by jointly considering semantic importance, reconstruction distortion, and channel quality, enabling adaptive selection of semantic tokens for transmission. A knowledge graph is integrated to preserve contextual relationships and enhance semantic robustness, particularly in low signal-to-noise ratio (SNR) regimes. The resulting optimization framework incorporates continuous relaxation, submodular semantic selection, and principled constraint handling, allowing efficient semantic resource allocation under bandwidth and multi-constraint conditions. The simulation results show that, although SA–cGAN achieves modest syntactic bilingual evaluation understudy scores at low SNR to approximately 0.72 at 20 dB, it significantly outperforms conventional and JSCC based schemes in semantic metrics, with semantic similarity, semantic accuracy, and semantic completeness consistently improving above 0.90 with SNR. Additionally, the model exhibits adaptive compression behavior, aggressively reducing redundant content while preserving critical semantic information to maintain fidelity. The convergence of training loss further validates the stable and efficient learning of semantic representations. Overall, the results confirm that the proposed SA–cGAN model effectively captures distortion invariant semantic representations and dynamically adapts transmitted content based on distortion criticality and information representability for meaning centric communication in future 6G networks.
\end{abstract}

\begin{IEEEkeywords}
6G Semantic Communication, Conditional Generative Adversarial Network, Resource Allocation, Attention Network, Information Representability, Distortion Criticality.
\end{IEEEkeywords}

\section{Introduction}
\IEEEPARstart{T}{he} sixth generation (6G) communication networks aim to achieve unprecedented objectives, including ultra-high data rates, improved energy efficiency, ultra-high user connectivity density, reliability, and ultra-low latency\cite{10540315},\cite{Wheeler_2023},\cite{9663101}. However, Shannon's information theory establishes fundamental limits on reliable bit-level transmission over a communication channel, primarily focusing on maximizing data rates within these theoretical bounds\cite{Shannon1948} and \cite{Tse2005}. In contrast, with the rapid increase in internet users, internet of things (IoT) devices, and data-intensive intelligent applications, future 6G networks must address not only higher data demands but also reliability and resource efficiency requirements. Rather than trying to exceed Shannon’s limits, emerging research explores alternative communication paradigms that redefine the optimization objective itself \cite{Strinati2021} and \cite{9937052}. In this context, semantic communication (SemCom) has gained significant attention as a promising approach that shifts the focus from accurate transmission of raw bits to preservation and reconstruction of meaning at the receiver \cite{6004632},\cite{8461983},\cite{Qin2022Semantic}. By optimizing communication resources with respect to semantic relevance rather than symbol-level fidelity, such systems address a fundamentally different objective and evaluation criterion\cite{Luo2022}. This shift in perspective motivates the exploration of semantic-aware encoding, transmission, and decoding strategies for next-generation wireless systems. In SemCom systems, a key challenge lies in bridging the gap between high-level semantic meaning and the low-level physical representations required for transmission \cite{10538233},\cite{9763856},\cite{10584461}. Existing approaches often struggle to ensure that abstract semantic concepts are represented in a form that is both interpretable and efficiently processable by learning based communication modules, particularly under bandwidth and channel constraints\cite{11276859} and \cite{10603423}. Moreover, preserving semantic intent while adapting to dynamically varying channel conditions remains nontrivial, as excessive compression may lead to semantic loss, whereas conservative transmission incurs significant resource overhead \cite{Thomas2022NeuroSymbolicAI} and \cite{9832831}. These challenges further complicate 6G networks, including ultra-dense IoT deployments, autonomous systems, and intelligent services, where devices exhibit diverse capabilities, channel conditions, and semantic requirements \cite{Xie2021} and \cite{8869705}. Addressing these issues requires semantic-aware communication paradigms that can intelligently align semantic importance with resource availability while remaining robust to semantic distortion and channel uncertainty. Such semantic-aware and learning-driven communication paradigms are increasingly recognized as fundamental enablers for emerging 6G use cases\cite{8766143} and \cite{10494374}. 

\subsection{Related Works}
In recent years, extensive research has explored SemCom and artificial intelligence (AI) enabled transmission frameworks as promising alternatives to traditional bit-centric designs for supporting emerging intelligent and data-intensive applications \cite{Iyer2023SemanticSurvey} and \cite{Atzori2010}. The authors in \cite{9885016}, have proposed a SemCom system called MR\_DeepSC, leveraging deep neural networks (DNN) to enhance performance in multi-user scenarios. They have adopted a semantic recognizer, which is based on a pretrained model, to differentiate users and have made use of transfer learning (TL) as a way of training receiver networks faster. The simulation findings have highlighted the increased performance of the system in different channel environments, which is most prominent in the situation where low signal-to-noise ratios (SNR) prevail. However, there is limited discussion on the potential limitations of the TL approach used to train receiver networks, which could affect the generalizability of the model. To overcome spectral scarcity in 6G networks, \cite{10005215} explores the potential of SemCom to improve transmission reliability and efficiency in 6G networks, particularly in interference prone environments. The study proposes a matching game framework for resource allocation (RA) to optimize bandwidth efficiency. However, the work does not fully explore the impact of varying channel quality conditions on the performance of the SemCom system, which is crucial for optimizing RA schemes. In addition, the authors have not addressed the potential limitations of the semantic extraction and retrieval process, particularly in terms of the accuracy of the similarity coefficient used to evaluate performance. 

The study in \cite{9955525} presents an overview of semantic and task-oriented communications, highlighting the integration of message semantics and communication goals in system design. It discusses the role of information theory and deep learning (DL) in enhancing communication efficiency across various sources of information such as images, text, and video. However, there is a lack of semantics-aware decision-making in real-time reconstruction policies, which could enhance the efficiency of communication systems by considering the context of information at both the source and the receiver. It does not address the implications of distributed parameter estimation under rate constraints, leaving a gap in understanding how to optimize performance in such scenarios. In turn, \cite{9991159} discusses the evolution of communication technologies from a technical to a semantic level, proposing a novel model for SemCom using conceptual spaces and functional compression. It does not explore the implications of semantic errors in detail, particularly how they affect overall communication efficiency in various environments, which is crucial for advancing SemCom systems. The exploration of equivalence relations and efficient semantic encodings is noted as an area that requires further investigation, indicating that the current study does not fully address this aspect.

Furthermore, \cite{10100737} proposed an end-to-end SemCom based image transmission system that considers the physical characteristics of the channel. The proposed method uses semantic segmentation at the transmitter and a pre-trained generative adversarial network (GAN) at the receiver to reconstruct realistic images, trained on the COCO-Stuff dataset. By transmitting only the segmentation map instead of the full image, the system achieves notable bandwidth savings compared to conventional methods. The study also examines the effects of channel distortions and quantization noise, offering valuable insights into bridging SemCom with real world transmission challenges. The authors of \cite{8985539}, have proposed the use of conditional generative adversarial networks (cGANs) to model unknown or complex wireless channels in an end-to-end learning framework, allowing transmitter gradients to flow through a learned channel surrogate. This demonstrates the power of cGANs for channel modeling and DNN based transceiver design, but it focuses on channel emulation rather than semantic adaptation or knowledge graph (KG) aware token selection. The Deep joint source channel coding (JSCC) family \cite{8683463}, develops learned joint source–channel encoders/decoders that significantly improve symbol-level reconstruction (images/text) under noisy channels. These methods excel at end-to-end reconstruction and rate adaptation, but they remain largely symbol-centric, they do not explicitly model semantic importance, KGs, or distortion vs. representability trade-offs that determine which semantic units to send. Several studies \cite{XIE2023102214} use GAN variants (WGAN, WGAN-GP) to synthesize realistic channel realizations or to generate channel statistics for training. These show GANs effectiveness in capturing channel distributions, but they focus on channel modeling or dataset synthesis rather than on semantic encoding/decoding and KG-aligned, attention-guided semantic selection for end-to-end meaning preservation. The study in \cite{Xu2023KnowledgeEnhancedSC} formulates KG based semantic transmission schemes that extract and transmit graph structured semantics to improve interpretation at the receiver. These schemes show how KG alignment improves semantic robustness, yet they typically do not combine KG guidance with adaptive generative models and self-attention token selection under a joint optimization framework. To summarize, the authors have used KG for information representation, but not co-optimized with cGAN driven RA and submodular selection. Recent works  \cite{10843783} investigate attention mechanisms and importance-aware selection to prioritize tokens or features for semantic transmission. These works demonstrate that attention can drive semantic compression, but they generally stop short of integrating a generative channel-aware model for dynamic symbol generation and do not provide a principled optimization treatment balancing distortion criticality and information representability for RA. 

As is evident, no prior work has explored self-attention based conditional generative adversarial network (SA-cGAN) enabled RA strategies for 6G SemCom. Existing SemCom studies focus primarily on semantic encoding and reconstruction, but they often overlook the problem of adaptive semantic RA under physical channel constraints. In particular, most existing approaches do not explicitly model the trade-off between semantic fidelity, reconstruction distortion, and channel reliability when selecting which semantic elements should be transmitted. Moreover, while attention-based models can identify important semantic tokens, they typically lack a generative mechanism to adapt the transmitted semantic representations to varying channel conditions. Motivated by these limitations, this work proposes a SA-cGAN driven RA methodology for 6G wireless networks, specifically addressing the trade-offs between information representability and distortion criticality (detailed in section II-A).
\vspace{-4mm}
\subsection{Key Contributions}
The main contribution of this work is the development of a novel SemCom framework that explicitly balances distortion criticality and information representability under practical 6G channel constraints using a SA–cGAN model. By integrating knowledge driven semantic extraction with adaptive generative modeling, the proposed framework enables efficient and meaning centric communication as an alternative to conventional bit level transmission approaches. Moreover, cGAN's unique capability for feedback based dynamic adaptation of semantic representation will provide efficient RA. The key contributions of the current article are as follows:
\begin{enumerate}
\item We propose a SemCom system that utilizes a KG oriented SA-cGAN to encode and decode information semantics, thereby reducing semantic ambiguity. This operation ensures that semantic selection is not solely driven by statistical attention, but is explicitly constrained by symbolic semantic structure encoded in the KG. 
\item We introduce an intelligent, optimized, and performance enhanced SA-cGAN enabled resilient system model with adaptive RA strategies for 6G SemCom networks. In the current study, with regard to RA, we mainly focus on bandwidth as the primary resource.
\item We furthermore propose the design of an SA-cGAN based rate adaptation mechanism that dynamically adjusts transmission rates based on both semantic content and network conditions addressing the trade-offs between information representability and distortion criticality. The framework supports multiple adaptation strategies, including conservative (high distortion, low representability), moderate (medium distortion and representability), and aggressive approach (low distortion, high representability).
\item We solve the dynamic and rigorous optimization problem via continuous relaxation, submodular selection, Lagrangian dual formulation, and Karush–Kuhn–Tucker (KKT) based characterization, thereby explicitly coupling information representability and distortion criticality with semantic RA.
\item Extensive simulation results demonstrate that the proposed SA–cGAN framework achieves substantial gains in semantic oriented metrics. In particular, semantic similarity improves from $\approx$ 0.23 to over 0.93, while semantic accuracy and semantic completeness increase consistently from around 0.18 to above 0.90, validating robust semantic transmission under varying channel conditions.
\end{enumerate} 

The rest of this paper is structured as follows: section II discusses the system model, while section III introduces the performance metrics and formulates the optimization problem. In section IV, solution for optimization problem is detailed followed by section V, which presents and discusses the obtained results. Finally, section VI concludes the paper. The key notations used throughout the article are expanded in Table I. 

\begin{table}
    \caption{Key Notations}
    \resizebox{\columnwidth}{!}{%
    \begin{tabular}{p{1.4cm}|p{5.4cm}}
    \hline
    Notation  &  Interpretation\\
    \hline
    $\mathcal{H}_{s}(\cdot)$ & Semantic encoder\\
    $\mathcal{H}_{c}(\cdot)$ & Channel encoder\\
    $\mathcal{C}_{\text{SA--cGAN}}(\cdot)$ & SA-cGAN model \\
    $X_i$ & Original text required to be sent to user $i$\\
    $\hat{X_i}$ & Decoded/received text\\
    $V$ & Global vocabulary or token set \\ 
    $x_{i,n}$ & Token $n$ in textual data $X_i$ \\
    $N_i$ & Total number of tokens in textual data $X_i$\\
    $e_{i,j}$ & Entity $j$ in text $X_i$\\
    $e_{i,k}$ & Entity $k$ in text $X_i$\\
    $r_{i,jk}$ & Relation between entity $e_{i,j}$ and entity $e_{i,k}$ \\
    $E_i$ & Number of entities in text $X_i$\\
    $R_i$ & Number of relations in text $X_i$\\
    $\mathcal{G}_i$ & Semantic information of textual data $X_i$ \\
    $G_i$ & Number of semantic triples in $\mathcal{G}_i$\\
    $\varepsilon_{i}^g$ & Semantic triples $g$ in $\mathcal{G}_i$\\
    $v_{i,b}^g$ & Token  b in semantic triple $\varepsilon_{i}^g$\\
    $S_{i,j}^g $ & Number of tokens in entity $e_{i,j}^g$\\
    $S_{i,jk}^g$ & Number of tokens in relation $r_{i,jk}^g$\\
    $\mathcal{E}$ & Entity\\
    $\mathcal{R}$ & Relation\\
    $\mathcal{K}$ & Knowledge graph\\
    $r_k$ & Semantic symbol\\
    $S$ & Selected semantic token set \\
    $\alpha_i$ & Self-attention/semantic importance score\\
    $\tau_\alpha$ & Semantic importance threshold\\
    $s_i$ & Binary semantic selection variable\\
    $G(\cdot)$ & cGAN generator\\
    $D(\cdot)$ & cGAN decoder\\
    $P_{\mathrm{src}(\cdot)}$ & Probability distribution over the set of semantic $\mathcal{R}$ in source KG\\
    $P_{\mathrm{dest}(\cdot)}$ & Probability distribution over the set of semantic $\mathcal{R}$ in destination KG\\
    ${B}_{\Phi}$ & BERT model \\
    $\log \mathrm{BLEU}_i$ & BLEU score\\
    $\mathrm{Sim}_i$ & Semantic similarity\\
    $\mathrm{Acc}_i$ & Semantic accuracy\\
    $\mathrm{Comp}_i$ & Semantic completeness \\
    $\mathcal{L}_{\mathrm{IR}}$ & Information representability loss\\
    $\mathcal{L}_{\mathrm{DC}}$ & Distortion criticality loss\\
    \hline
    \end{tabular}
}
 \end{table}
\vspace{-4mm}
\section{System Model}
We consider a point-to-point SA-cGAN enabled 6G SemCom model, where a source node transmits textual information to a destination node over a wireless channel, as illustrated in Fig.~\ref{System model}. The objective of the system is not only to reliably deliver raw symbols, but to preserve the semantic meaning of the transmitted information under multi-constraints. In this model, the source of semantic information generates raw information that contains latent semantic aspects inherent in the data, such as contextual and relational meaning. The attention mechanism and the KG composed of entities and relations (detailed in section II-C) within the semantic encoder is then employed to identify and extract the most relevant semantic representations from this raw information by capturing long-range dependencies and contextual relationships. Rather than raw symbols, these extracted semantic representations form the input to the cGAN.
\begin{figure}
	\begin{center}
		\includegraphics[width=1.0\linewidth]{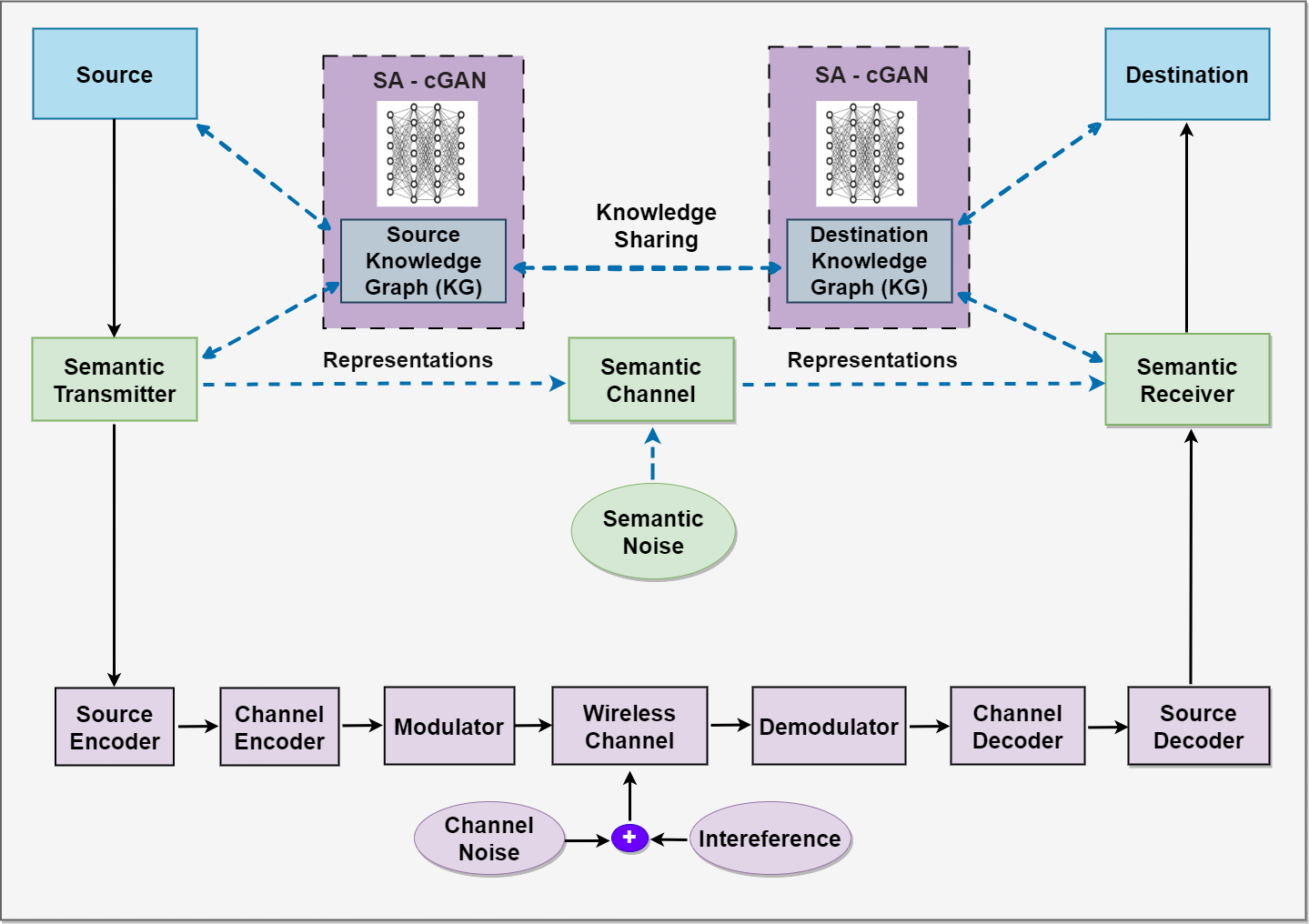}
        \vspace{-6mm}
		\caption[System model]{System model}
        \vspace{-6mm}
		\label{System model}
	\end{center}
\end{figure}
In the proposed framework, the cGAN learns a conditional mapping between semantic representations and compact symbolic embeddings that are optimized for transmission under bandwidth, distortion, and SNR constraints. RA is implicitly realized through this conditional generation process, where the cGAN adapts the number, structure, and fidelity of transmitted semantic symbols based on distortion criticality and channel conditions. As a result, essential semantic elements are generated and prioritized for transmission, enabling efficient bandwidth utilization while maintaining reliability and quality of service (QoS). As shown in~\eqref{1}, employing KG, the SA-cGAN assisted transmitter will encode the extracted semantics of the original information $(i)$ through the semantic encoder $\mathcal{H}_{s}(i)$. The output is then processed to perform channel encoding through the channel encoder $\mathcal{H}_{c}(\mathcal{H}_{s}(i))$. Subsequently, the SA-cGAN model $\mathcal{C}_{\text{SA--cGAN}}(\mathcal{H}_{c}(\mathcal{H}_{s}(i)))$ is used during the semantic and channel encoding for optimal performance, and transmits the encoded information $(I)$. 
\begin{equation} 
I_\text{encoded} = \mathcal{C}_{\text{SA--cGAN}}(\mathcal{H}_{c}(\mathcal{H}_{s}(i)))
\label{1}
\end{equation} 
However, as shown in~\eqref{2}, employing SA-cGAN with KG, the receiver will decode, extract the semantics, and restore the original information $(i)$ with reduced semantic and channel noise impacts on transmitted rates and QoS through dynamic RA and semantic representation adaptation.
\begin{equation}
i_\text{decoded} = \mathcal{C}_{\text{SA--cGAN}}(\mathcal{H}_{s}^{-1}(\mathcal{H}_{c}^{-1}(I)))
\label{2}
\end{equation} 
This semantic noise is considered a misinterpretation of semantics, which can be reduced by transmitting more semantic representation if needed to overcome the rate failures for increased rate resiliency. \emph{Rate failures} refer to instances where the effective transmission rate drops below the threshold required to preserve semantic fidelity, often due to channel impairments or semantic distortion. However, the \emph{rate resiliency} of the system is formally defined as its ability to consistently deliver seamless communication while maintaining a committed bit rate above a predefined threshold. \emph{Semantic adaptation} in the proposed framework is governed by an explicit trade-off between information representability and distortion criticality. \emph{Information representability} reflects the degree to which the selected semantic representations preserve the original meaning, while \emph{distortion criticality} captures the system’s tolerance to semantic degradation under bandwidth and channel constraints. When a channel is noise-free, the SA-cGAN optimizes the transmitted semantic representations to reduce bandwidth occupancy and energy consumption. Alternatively, when the channel is noisy, SA-cGAN performs the adaptation of the semantic representation and transmits more semantic elements that optimize semantic similarity, improving QoS against noise. To reduce the impact of noise, bandwidth utilization, and energy consumption, essential semantic elements will be transmitted, enticing the optimal resources, that will be sufficient to extract the actual meaning of the information at the receiver. The SA–cGAN adaptively navigates this trade-off by conditioning semantic generation on distortion criticality and network state, allowing flexible operation between conservative, moderate, and aggressive semantic adaptation approaches detailed in a subsequent section, depending on the desired balance between robustness and semantic fidelity. 
\vspace{-4mm}
\subsection{SA-cGAN Enabled 6G Semantic Communication}
We consider a network composed of multiple base stations (BS$_1$,BS$_2$,.....,BS$_n$) and mobile users (MU$_1$,MU$_2$,.....,MU$_n$), as shown in Fig.~\ref{SA-cGAN enabled 6G semantic communication}. The transmission medium is subject to various impairments, including channel noise and interference. In our framework, these diverse sources of degradation are collectively modeled as distortion, capturing their combined effect. When the accumulated distortion exceeds tolerable limits, the active link experiences rate degradation, and if the distortion becomes severe enough to prevent sustaining the committed rate, a rate failure event occurs. This event is detected by a proactive rate failure detection module, which signals the SA-cGAN controller to initiate a remediation and rate restoration phase. The SA-cGAN controller acts as a decision making entity that monitors channel quality (SNR), distortion criticality, and semantic reconstruction performance, and accordingly adjusts the semantic encoding strategy by selecting additional or refined semantic representations for transmission. Through this closed loop control mechanism, the SA-cGAN encoder continuously seeks optimal semantic representations that balance the trade-offs among SNR, committed bit rate, distortion criticality, and information representability, thereby ensuring resilient SemCom. As shown in Fig.~\ref{SA-cGAN enabled 6G semantic communication}, different strategies can be adopted: a conservative approach (high distortion and low information representability), a moderate approach (medium distortion and information representability), or an aggressive approach (low distortion and high information representability). Specifically, we can express the relationship as follows: If the information representability is low, this implies that the information representability loss (detailed in section III-B) is high. In such a case, the system should transmit a larger number of semantics, leading to a high semantic representation. Conversely, if the information representability is high, the corresponding information representability loss becomes low. In this case, fewer semantics need to be transmitted, resulting in a low semantic representation, lower bandwidth, and lower energy consumption. On the other hand, when distortion criticality is high, a larger number of semantic units must be transmitted, whereas fewer semantics are sufficient when distortion criticality is low. This trade-off can be intuitively explained using a teacher–student analogy. If a student already possesses the prerequisite knowledge in a course, the teacher can convey the intended concepts using fewer explanations, as the information representability is high and the information representability loss is low. In this case, even if minor distortions (unclear explanation) occur, the student can still infer the correct meaning, implying low distortion criticality and requiring fewer semantic elements to be transmitted. Conversely, if the student lacks prerequisite knowledge, information representability is low and the conveyed content becomes highly sensitive to distortion. Under such high distortion criticality, the teacher must provide more detailed explanations and repetitions to ensure correct understanding. Building on this principle, when improving the SNR through conventional means such as power control is impractical, the system compensates by employing semantic rate adaptation. In such cases, additional semantic features are transmitted to enrich the semantic representation and enhance robustness against noise. Although this strategy differs fundamentally from traditional communication paradigms, it is particularly effective in environments characterized by severe noise or unrecoverable channel impairments.
\begin{figure}
	\begin{center}
		\includegraphics[width=1\linewidth]{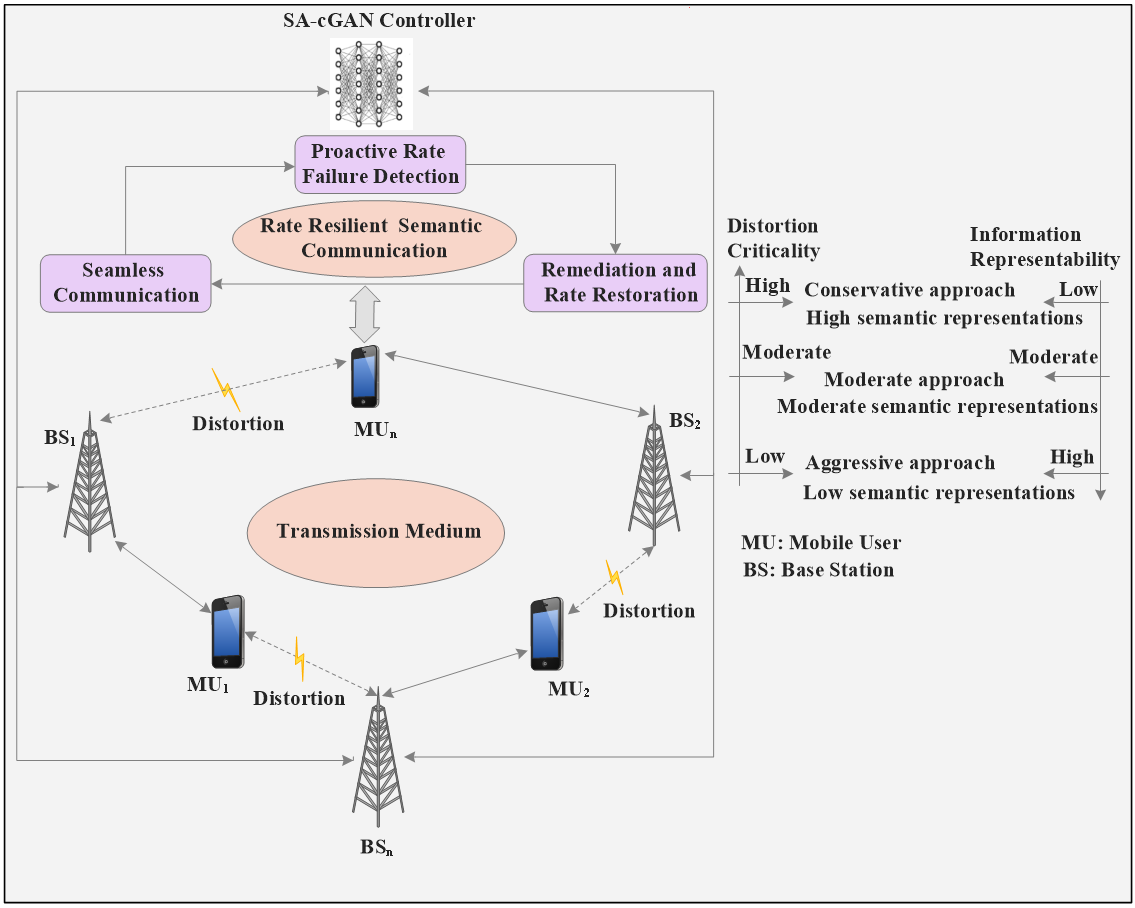}
        \vspace{-6mm}
		\caption[System model]{SA-cGAN enabled 6G semantic communication model}
        \vspace{-6mm}
		\label{SA-cGAN enabled 6G semantic communication}
	\end{center}
\end{figure}
\vspace{-4mm}
\subsection{Semantic Information Extraction}
A token $x_{i,n}$ is used to denote a word, symbol, or punctuation mark within textual data. Consequently, the content that the BS must deliver to user $i$ is represented as an ordered sequence of tokens, expressed in~\eqref{3}, as \cite{9832831};
\begin{equation}
X_i = \{x_{i,1}, x_{i,2}, \ldots, x_{i,n}, \ldots, x_{i,N_i}\}, \ \forall x_{i,n} \in V,
\label{3}
\end{equation}
where $V$ denotes the collection of tokens within a corpus, and $N_i$ indicates the total number of tokens in $X_i$. For instance, if the BS needs to send the sentence ``DNN is trained on large dataset." to user $i$, then, we obtain:
$X_i = \{[DNN], [is], [trained], [on], [large], [dataset], [.] \}$, where $x_{i,1} = [DNN], x_{i,2} = [is], x_{i,3} = [trained], x_{i,4} = [on], x_{i,5} = [large], x_{i,6} = [dataset],$ and $x_{i,7} = [.]$.

In our framework, the semantic information extracted from a textual data is modeled by a KG ($\mathcal{K}$) \cite{9416312}. Hence, the semantic information is formed by a collection of nodes along with a collection of edges, as shown in Fig.~\ref{Semantic extraction}. Each node within the semantic information corresponds to an entity that denotes an object or a concept in the real world. Hereafter, we define entity $j$ which is a subsequence of the text $X_i$ as $e_{i,j}$. For instance, in Fig.~\ref{Semantic extraction}, the phrase ``large dataset'' is treated as an entity composed of two tokens from the sample text. An information extraction system, such as the scientific information extractor described in \cite{luan-etal-2018-multi} can be employed to identify the set $\mathcal{E}_{i}$ of $E_i$ entities in text $X_i$.

Edges denote the relationships connecting pairs of entities. For any two identified entities $(e_{i,j}, e_{i,k})$, $j \neq k$, the BS must find the relation $r_{i,jk} \in \mathcal{R}_i$ between them, where $\mathcal{R}_i$ is the set of $R_i$ relations involved in text $X_i$. For example, in Fig.~\ref{Semantic extraction}, the relation between entity ``deep neural network'' and ``large dataset'' can be formulated as ``trained on''. It is important to note that the relationships (i.e., the edges of the semantic information) are directional, which means $r_{i,jk} \neq r_{i,kj}$. We assume that there is a predefined set $\mathcal{R}$ that includes all possible relations in the corpus, where each relation is expressed as a two-token sequence such as ``trained on'', ``used for'', ``part of'', and ``related to'', etc., as demonstrated in \cite{luan-etal-2018-multi}. The set $\mathcal{R}$ contains a finite number of relation types extracted from the corpus and knowledge base, i.e., $|\mathcal{R}| < \infty$, and its cardinality depends on the application domain and dataset. In our framework, $\mathcal{R}$ is constructed from the training corpus and associated knowledge base and includes commonly occurring semantic relations. The number of distinct relations is in the order of low hundreds. However, these relations are semantically clustered into higher-level predicate categories during training, which limits the effective classification complexity. Therefore, for any given pair of entities $e_{i,j}$ and $e_{i,k}$ in the original text $X_i$, the relation $r_{i,jk}$ between $e_{i,j}$ and $e_{i,k}$ can be obtained by classification algorithms such as convolutional neural networks \cite{lin-etal-2016-neural}. Here, the relation classification algorithm receives as input the sentence that includes the pair of entities from the original text, and its output is the predefined two-token relation summarizing the connection between them. We assume that the information extraction system is capable of detecting the majority of semantically relevant entities in the source text. Consequently, the triples structured as “entity–relation–entity” serve to capture and convey the meaning of the original text.

Using the identified entities together with the derived relations, the semantic information of the textual data $X_i$ can be represented as shown in~\eqref{4},\cite{9832831};
\begin{equation}
\mathcal{G}_i = \{\varepsilon_{i}^1, \ldots, \varepsilon_{i}^g, \ldots, \varepsilon_{i}^{G_i}\},
\label{4} 
\end{equation} \hfill
where, $\varepsilon_{i}^g = (e_{i,j}^g, r_{i,jk}^g, e_{i,k}^g), \ \forall e_{i,j}^g, e_{i,k}^g \in \mathcal{E}_i, \ j \neq k, \ \forall r_{i,jk}^g \in \mathcal{R}_i$ is a semantic triple and $G_i$ is the number of semantic triples in $\mathcal{G}_i$. Since each entity (e.g., $e_{i,j}^g$ or $e_{i,k}^g$) and each relation $r_{i,jk}^g$ consist of a sequence of tokens (e.g., $e_{i,j} = \{[deep], [neural], [network]\}$, $e_{i,k} = \{[large], [dataset]\}$, and $r_{i,jk} = \{[trained], [on]\}$), triple $\varepsilon_{i}^g$ can be expressed as $\varepsilon_{i}^g = \{v_{i,1}^g, \ldots, v_{i,b}^g, \ldots, v_{i,B_i^g}^g\}$, where $v_{i,b}^g \in V$ denotes the $b_{th}$ token within the semantic triple $\varepsilon_{i}^g$ and $B_i^g = S_{i,j}^g + S_{i,jk}^g + S_{i,k}^g,$ with $S_{i,j}^g$, $S_{i,k}^g$ and $S_{i,jk}^g$ represent the number of tokens in the entity $e_{i,j}^g$, $e_{i,k}^g$ and relation $r_{i,jk}^g$, respectively.
Then, the total number of tokens in the semantic information $\mathcal{G}_i$ is given by~\eqref{5},\cite{9832831};
\begin{equation}
Z(\mathcal{G}_i) = \sum_{g=1}^{G_i} \left(S_{i,j}^g + S_{i,k}^g + S_{i,jk}^g\right)
\label{5}
\end{equation}

As illustrated in Fig.~\ref{Semantic extraction}, the extracted semantic information occupies substantially less data space compared to the original text (i.e., $Z(\mathcal{G}_i) \ll N_i$). This is because, although semantic triples are constructed using tokens from the source text, $\mathcal{G}_i$ does not preserve all tokens in $X_i$. Instead, it is a selective semantic abstraction in which only entity and relation tokens are preserved. Tokens that do not contribute to semantic meaning, such as auxiliary verbs, determiners, and redundant descriptive phrases, are excluded. Consequently, even though semantic triples reuse tokens from the original vocabulary, the total number of transmitted tokens is significantly reduced. This reduction arises because the two-token relations that link the pairs of entities in $\mathcal{G}_i$ effectively eliminate the redundant context present in the original textual data $X_i$.

\begin{figure}
	\begin{center}
		\includegraphics[width=1.0\linewidth]{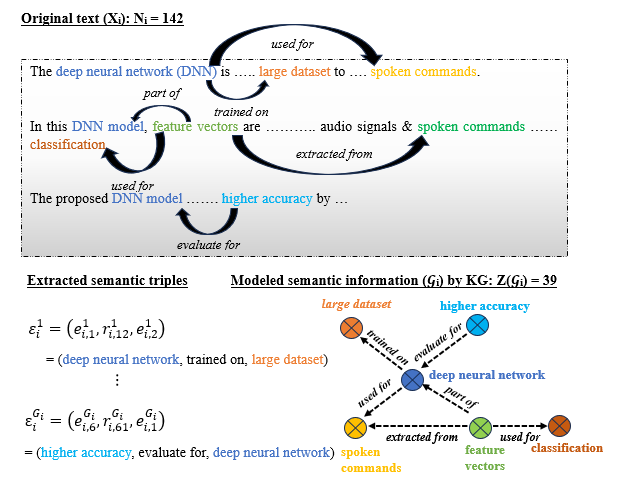}
        \vspace{-6mm}
		\caption{An illustration of source text along with the derived semantic information.}
        \vspace{-6mm}
		\label{Semantic extraction}
	\end{center}
\end{figure}
\vspace{-4mm}
\subsection{KG Guided Semantic Grounding and Token Selection}
In section II-B, we examined the process of interpreting the relationships between entities, which is used to build the KG as $\mathcal{K} = \{\mathcal{E}, \mathcal{R}, \mathcal{E}\}$, where $\mathcal{E}$, $\mathcal{R}$, and $\mathcal{E}$ are used to represent entities, relations, and entities, respectively. Next, in order to physically deliver the semantic information $\mathcal{G}_i$, it is necessary to form a sentence applying the defined semantic information extraction mechanism. As an illustration with an example of DNN training, the semantic relation learned by the system may be ``DNN, large dataset" connected by the relation/attribute ``trained on". The information we aim to deliver corresponds to the sentence: ``DNN is trained on large dataset". Specifically, semantic extraction was used to identify $\mathcal{E}$ and $\mathcal{R}$ from the input text and organize them into a KG representation. Although this symbolic structure captures high-level semantic relationships, it cannot be transmitted directly  over a physical communication channel. Therefore, a grounding mechanism is required to map symbolic semantic elements into neural representations that can be encoded, transmitted, and decoded reliably. In this work, KG guided semantic grounding refers to the process of anchoring symbolic elements derived from the $\mathcal{K}$ ($\mathcal{E}$ and $\mathcal{R}$) to their corresponding neural token embeddings, rather than grounding them in formal logical semantics. Specifically, grounding is performed during the semantic selection stage by explicitly aligning self-attention scores with KG derived semantic relevance. This ensures that tokens corresponding to meaningful semantic units, such as, entities and predicate relations are prioritized for transmission.

Let $X = \{x_1, x_2, \dots, x_n\}$ denote the tokenized input sentence, and let $\mathcal{K} = (\mathcal{E}, \mathcal{R})$ denote the KG constructed from $X$, where $\mathcal{E}$ is the set of extracted entities and $\mathcal{R}$ is the set of relations between them. Each token $x_i$ has a corresponding embedding vector $u_i$ and a self-attention/semantic importance score $\alpha_i$ (detailed in section III-B). Grounding is realized by jointly evaluating $\alpha_i$ and the token’s presence in $\mathcal{K}$. Formally, a token $x_i$ is considered grounded if it corresponds to either (i) an entity in $\mathcal{E}$ or (ii) a relation in $\mathcal{R}$. During semantic selection, such grounded tokens are assigned higher priority, while non-grounded tokens e.g., function words or syntactically necessary but semantically redundant tokens are deprioritized. This grounding-aware selection is implemented as a KG guided filtering and ranking operation applied to the self-attention scores (provided $\alpha_i \ge \tau_\alpha$) prior to semantic symbol generation. Concretely, grounding is implemented through a KG aware selection function that maps the tuple $(\alpha_i, x_i, \mathcal{K}) \;\longrightarrow\; s_i \in \{0,1\}$, where $s_i = 1$ (detailed in section III-B) indicates that token $x_i$ is selected for semantic transmission. This operation ensures that semantic selection is not solely driven by statistical attention, but is explicitly constrained by symbolic semantic structure encoded in the KG. The selected grounded embeddings $\{u_i \mid s_i = 1\}$ form the input to the SA--cGAN model, which learns compact semantic symbols for transmission. In this work, a semantic symbol refers to a compact latent representation generated by the cGAN from selected KG-grounded token embeddings. Unlike conventional modulation symbols that represent bit sequences, semantic symbols encode meaning-aware representations that are optimized with respect to distortion criticality and information representability prior to channel transmission. In this way, grounding acts as an interface between symbolic semantic representations and sub-symbolic neural encoding, ensuring that only semantically meaningful content is forwarded to the generative model.

It is important to emphasize that this grounding mechanism does not rely on formal logical inference, first-order logic, or fuzzy semantics as presented in \cite{BADREDDINE2022103649} and \cite{11076808}. Instead, grounding is achieved through KG-guided neural selection, where KG $\mathcal{E}$ and $\mathcal{R}$ provide semantic constraints that guide the attention-based $\alpha_i$ 
used for token selection. This approach remains computationally efficient and well aligned with the data-driven nature of the proposed SA–cGAN framework. By incorporating KG-based semantic structure during selection, the system prioritizes semantically meaningful tokens while reducing redundant information, thus improving semantic transmission efficiency.
\vspace{-4mm}
\subsection{Channel Model}
The robustness of the proposed SA-cGAN model is evaluated under Rayleigh fading channel. It is assumed that the transmission occurs in a dense environment consisting of multiple buildings and other surrounding obstacles. Hence, there may occur multiple scatterers resulting in a time-varying multipath fading channel which is assumed to be a Rayleigh fading channel. Let $r_k$ denote the semantic symbol transmitted at a time instance $k$, then the corresponding signal received at the receiver, as shown in~\eqref{6}, is given by 
\begin{equation}
y_k = h_k r_k + n_k,
\label{6}
\end{equation}
where $h_k \sim \mathcal{CN}(0,1)$ denotes the Rayleigh fading coefficient, and $n_k \sim \mathcal{CN}(0,\sigma_{n}^{2})$ represents additive white Gaussian noise (AWGN). For AWGN channels, $h_k$ = 1, whereas for Rayleigh channels it follows a fading distribution \cite{goldsmith2005wireless} and \cite{proakis2008digital}. This model captures both noise limited and fading dominated propagation conditions and is consistent with common SemCom modeling practices. Moreover, after transmission through the Rayleigh fading channel and noise corruption, the receiver performs channel decoding followed by semantic reconstruction to recover the textual output $\hat{X}_i$. The effectiveness of this end-to-end semantic transmission process is evaluated using the performance metrics, which are introduced next.
\vspace{-4mm}
\section{Performance Metrics and Problem Formulation}
\subsection{Performance Metrics}
In SemCom, successful transmission must be evaluated at multiple levels because preserving meaning is fundamentally different from merely reconstructing symbols. A single metric cannot fully characterize semantic fidelity (i.e., preservation of intended meaning) under bandwidth and channel constraints. Therefore, we evaluated the performance of the proposed method using four complementary metrics: \emph{bilingual evaluation understudy (BLEU) score}, \emph{semantic similarity}, \emph{semantic accuracy}, and \emph{semantic completeness} each capturing a distinct aspect of performance. Although some previous works loosely interpret BLEU as a measure of semantic accuracy \cite{9832831}, BLEU is in fact a syntactic n-gram overlap metric and is not equivalent to token-level semantic accuracy. Therefore, BLEU and semantic accuracy are related but not equivalent. However, BLEU does not capture semantic equivalence under paraphrasing or lexical substitution. For this reason, we additionally employ semantic similarity to measure sentence-level meaning preservation, semantic accuracy to quantify token-level correctness (precision), and semantic completeness to measure token-level coverage (recall). This multi-metric evaluation ensures that the proposed SA–cGAN framework is assessed not only for symbol reconstruction, but also for meaningful and contextually faithful semantic transmission. For clarity, we distinguish these metrics formally as follows.

\subsubsection{BLEU Score (Syntactic Fidelity)}
BLEU measures the preservation of the surface-level linguistic structure by counting $n$-gram matches between the decoded sentence $\hat{X}_i$ and the original sentence $X_i$. It is sensitive to word ordering and phrase continuity, and therefore reflects syntactic similarity rather than semantic meaning. BLEU is useful for evaluating symbol-level (i.e., token) reconstruction, but it cannot capture paraphrasing or meaning preservation. 
For the transmitted sentence $X_i$ with size $N_i$ and the decoded sentence $\hat{X}_i$ with size $\hat{N}_i$, the BLEU as shown in~\eqref{123}, can be expressed as
\begin{equation}
\log \mathrm{BLEU}_i
= \min\!\left(1 - \frac{\hat{N}_i}{N_i},\, 0\right)
+ \sum_{n=1}^{N} u_n \log p_n^{(i)},
\label{123}
\end{equation}

where $u_n$ is the weights of n-grams and $p_n^{(i)}$ is the n-grams score, defined in~\eqref{eq:p_n}, as 
\begin{equation}
p_n^{(i)} = \frac{\sum_{k} \min \left( C_k(\hat{X}_i), C_k(X_i) \right)}
{\sum_{k} \min \left( C_k(\hat{X}_i) \right)},
\label{eq:p_n}
\end{equation}
where $C_k(\cdot)$ denotes the count of occurrences of the $k$-th element in the $n$-gram sequence. The BLEU score evaluates the variation in n-grams between two sentences, where n-grams represent sequences of n consecutive words used for comparison. For example, for sentence ``This is my car,’’ 1-gram: ``This’’ ``is’’ ``my’’ and ``car’’ 2-grams: ``This is’’ ``is my’’ and ``my car’’. The same principle can be followed for the rest.

The BLEU score yields a value between 0 and 1, reflecting the degree of similarity between the decoded text and the original transmitted text, where a score of 1 represents perfect correspondence. However, few human translations will attain the score of 1, as minor word differences do not necessarily alter the overall meaning of a sentence. For example, the sentences “I bought a new car” and “I bought a new automobile” convey the same meaning but result in different BLEU scores due to lexical variation. To better capture such semantic equivalence, we introduce an additional metric semantic similarity, which operates at the sentence level to complement the BLEU score.

\subsubsection{ Semantic Similarity (Semantic-Level Meaning Preservation)}
A single word may convey different meanings depending on its context. For example, the word mouse carries distinct interpretations in the domains of biology and computing. Traditional embedding methods, such as \textit{word2vec} \cite{479394}, are unable to effectively capture such polysemy, as they represent each word with a fixed numerical vector that does not adapt to contextual variations. To capture this characteristic, semantic similarity is introduced as a new metric that measures how closely the meanings of two sentences align. Semantic similarity as shown in~\eqref{Sim}, computes the cosine similarity between contextual sentence embeddings e.g., bidirectional encoder representations from transformers  (BERT), i.e.,
\begin{equation}
\label{Sim}
\mathrm{Sim}_i = 
\frac{\mathbf{B}_\Phi(X_i)\cdot\mathbf{B}_\Phi(\hat{X}_i)^\top}
{\|\mathbf{B}_\Phi(X_i)\|\;\|\mathbf{B}_\Phi(\hat{X}_i)\|}
\end{equation}

where $\mathbf{B}_{\Phi}$ denotes the BERT model \cite{peters-etal-2018-deep}, which is a large-scale pre-trained network comprising billions of parameters, utilized for semantic feature extraction. The semantic similarity metric defined in~\eqref{Sim} yields a value between 0 and 1, indicating the degree of resemblance between the decoded sentence and the transmitted sentence, where 1 signifies complete similarity and 0 denotes no semantic correspondence between $X_i$ and $\hat{X_i}$. In contrast to BLEU, BERT has been trained on billions of sentences, enabling it to effectively learn and represent semantic information by generating context-specific vector representations. Additionally, unlike BLEU, this metric is invariant to synonyms, word order, or paraphrasing, and therefore reflects how well the meaning of the sentence survives the SA--cGAN semantic encoding, channel effects, and decoding.

\subsubsection{Semantic Accuracy (Token-Level Precision)}
Semantic accuracy as shown in~\eqref{eq:semantic_accuracy}, measures the proportion of decoded tokens that are correct with respect to the source sentence, i.e.,
\begin{equation}
\mathrm{Acc}_i = 
\frac{\sum_{x_{i,n}\in V} \min(\sigma(\hat{X}_i,x_{i,n}),\,\sigma(X_i,x_{i,n}))}
{\sum_{x_{i,n}\in V} \sigma(\hat{X}_i,x_{i,n})}
\label{eq:semantic_accuracy}
\end{equation}

where, $V$ denotes the global vocabulary or token set used for counting, which includes all words, subwords, and punctuation tokens defined by the tokenizer. Formally, $V = \{ x_{i,1}, x_{i,2},\dots \}$. Although the summation in~\eqref{eq:semantic_accuracy} extends over all $x_{i,n}\in V$, in practice it is restricted to tokens appearing in either $X_i$ or $\hat{X}_i$ to reduce computational complexity. Each $x_{i,n}\in V$ represents a single token (word, subword, or punctuation mark) from the vocabulary, and the summation accumulates token-level matches across the entire sentence. The function $\sigma(\hat{X}_i, x_{i,n})$ denotes the count of token  $x_{i,n}$ in the decoded (received) sentence $\hat{X}_i$, where $\sigma(\hat{X}_i,x_{i,n}) \in \{0,1,2,\dots\}$. For instance, if $\hat{X}_i = \{\text{the}, \text{cat}, \text{cat}\}$, then $\sigma(\hat{X}_i, \text{cat}) = 2$ and $\sigma(\hat{X}_i, \text{the}) = 1$. Similarly, $\sigma(X_i, x_{i,n})$ represents the count of token  $x_{i,n}$ in the original (transmitted) sentence $X_i$; for example, if $X_i = \{\text{the}, \text{cat}\}$, then $\sigma(X_i, \text{cat}) = 1$. The term $\min\!\big(\sigma(\hat{X}_i, x_{i,n}), \sigma(X_i, x_{i,n})\big)$ corresponds to the clipped (matched) count for token  $x_{i,n}$, which restricts the number of matches for that token to its occurrence frequency in the original message. This clipping prevents overcounting repeated tokens that may appear multiple times in the decoded output but fewer times in the source sentence, thereby ensuring that only legitimate token matches are counted in the accuracy computation. This metric evaluates whether the transmitted semantics introduce incorrect or hallucinated (information that the transmitter never sent) tokens. Unlike BLEU, it is order-independent and reflects token-level correctness. Thus, BLEU $\neq$ semantic accuracy: BLEU penalizes word-order and phrase mismatch, whereas semantic accuracy only measures correctness of recovered semantic elements.

\subsubsection{Semantic Completeness (Token-Level Recall)}
Semantic completeness as shown in~\eqref{eq:semantic_completeness}, measures how much of the original message’s content is present in the decoded sentence, i.e.,
\begin{equation}
\mathrm{Comp}_i = 
\frac{\sum_{x_{i,n}\in V} \min(\sigma(\hat{X}_i,x_{i,n}),\,\sigma(X_i,x_{i,n}))}
{\sum_{x_{i,n}\in V} \sigma(X_i,x_{i,n})} 
\label{eq:semantic_completeness}
\end{equation}

Using the same clipped token matching defined above, the numerator $\sum_{x_{i,n} \in V} \min\!\big(\sigma(\hat{X}_i, x_{i,n}), \sigma(X_i, x_{i,n})\big)$ represents the total number of correctly recovered tokens, and the denominator $\sum_{x_{i,n} \in V} \sigma(X_i, x_{i,n})$ denotes the total number of tokens present in the original message $X_i$. Hence, $\mathrm{Comp}_i$ quantifies the fraction of original semantic content that has been successfully reconstructed at the receiver, serving as a measure of semantic recall or completeness. A higher value of $\mathrm{Comp}_i$ indicates that the system has retained more of the transmitted meaning, whereas a lower value signifies semantic loss due to bandwidth compression, channel distortion, or cGAN reconstruction limitations. Semantic completeness therefore quantifies the information preservation achieved by semantic resource allocation.

These four metrics collectively capture the behavior of the proposed SA–cGAN model in managing the trade-off between distortion criticality and information representability.

\subsection{Problem Formulation}
Considering semantic transmission efficiency, we aim to design an optimization problem that selects which semantics to encode and transmit, under the trade-off between: (a) information representability loss $(\mathcal{L}_{\mathrm{IR}})$, that measures how well a few selected semantics can represent the full meaning of the original sentence, and (b) distortion criticality loss $(\mathcal{L}_{\mathrm{DC}})$, that measures how vulnerable the selected semantics is to distortion due to channel and symbolic ambiguity, especially under fading channel and imperfect cGAN reconstruction. 

Let $X = \{ x_1, x_2, \dots, x_n \}$ denote the input sentence (or source message) composed of $n$ tokens, and let $S = \{ s_1, s_2, \dots, s_k \} \subseteq X$ denote the set of semantic tokens (i.e., semantics) selected from $X$ based on the semantic importance scores $\alpha_i$ using the self-attention mechanism. The semantic importance scores $\alpha_i$ are computed via a self-attention mechanism over token embeddings $\{u_i\}_{i=1}^{n}$, where the learnable parameters assign higher weights to tokens that contribute more significantly to the overall semantic embedding $\Phi(X)$. These scores are further refined using KG guided grounding to prioritize tokens corresponding to extracted $\mathcal{E}$ and $\mathcal{R}$. We define binary semantic selection variables: $s_i \in \{0,1\}, \quad \forall i \in \{1, \dots, |X|\}$, where $s_i = 1$ indicates that token $x_i$ is selected in $S$, provided $\alpha_i \ge \tau_\alpha$. On the other hand, if $\alpha_i < \tau_\alpha$, then  $s_i = 0$ (i.e., dropped or compressed). Here, $\alpha_i$ denotes the self-attention score/weight and $\tau_\alpha$ is a predefined threshold for importance that can be tuned based on design needs, and $|X|$ indicates the total number of semantic units in the input sentence X. We also define $\Phi(X)$ as semantic embedding of the original message $X$ which is equal to $\sum_{i=1}^n\alpha_i u_i$. $\Psi(S)$ as an aggregated semantic embedding of the selected subset $S$ which is equal to $\sum_{i=1}^n s_i u_i$. Where, $u_i$ denotes the embedding vector for token $x_i$ (or $s_i$ when selected (i.e., $\Phi(s_i)$). \(c_i: \beta_i \Big[ \mathbb{E}_{z \sim \mathcal{N}(0,1)}\! \| x_i - D(G(z|x_i))\|_2^2 + \kappa \, \mathrm{SNR}(x_i)^{-1} \Big]\) represents the distortion–reliability penalty associated with the transmission of the semantic token $x_i$ (provided $\forall i : s_i=1$ i.e., $x_i \in S$), reflecting its semantic reconstruction difficulty and channel unreliability.
Here, $G(\cdot)$ and $D(\cdot)$ denote the generator and decoder (discriminator-assisted reconstruction module) of the cGAN used for the generation and reconstruction of semantic symbols, respectively. Specifically, $G(z|x_i)$ represents the semantic symbol generated by the cGAN conditioned on the semantic token $x_i$ and a latent random vector $z$, while $D(\cdot)$ reconstructs the semantic representation from the generated symbol. The latent variable $z \sim \mathcal{N}(0,1)$ denotes the stochastic input to the generator, allowing the model to learn a distribution of semantic representations conditioned on the input semantic tokens. $\beta_i$ denotes the weight of the token $x_i$ in the channel cost. In our framework, cost means the transmission burden and the reliability penalty of sending a semantic unit. It is a semantic physical penalty term that is used to guide optimal token selection. The higher $\beta_i$ indicates that the token $x_i$ is more critical for transmission reliably. In addition, $k$ is a bounding parameter associated with the inverse term of the SNR. The higher $k$ indicates that the optimization penalizes the tokens more if they have a low SNR. Following the above discussion, the mathematical formulations of $\mathcal{L}_{\mathrm{IR}}$ and $\mathcal{L}_{\mathrm{DC}}$ are provided in \eqref{IR} and \eqref{DC}, respectively. 

\begin{equation}
\label{IR}
    \mathcal{L}_{\mathrm{IR}}= \left\| \Phi(X) - \Psi(S) \right\|_2^2 
\end{equation}

\begin{equation}
\begin{aligned}
\mathcal{L}_{\mathrm{DC}} 
= \sum_{i=1}^n \beta_i \Big(
& \mathbb{E}_{z \sim \mathcal{N}(0,1)} 
\left\| x_i - D(G(z|x_i)) \right\|_2^2 \\
& + \kappa \, \mathrm{SNR}(x_i)^{-1}
\Big)
\end{aligned}
\label{DC}
\end{equation}

Together, $\mathcal{L}_{\mathrm{IR}}$ and $\mathcal{L}_{\mathrm{DC}}$ quantify the semantic representability distortion trade-off that governs optimal token selection. The objective is therefore to identify a subset $S \subseteq X$ that minimizes semantic embedding deviation while controlling semantic reconstruction difficulty and channel unreliability. Building upon these loss components, we now formulate a semantic-adaptive extended optimization problem that explicitly incorporates practical transmission constraints. Algorithm \ref{Algo 1} presents the SA-cGAN enabled 6G SemCom workflow module.

\begin{algorithm}
\caption{SA-cGAN enabled 6G semantic communication workflow}
\label{Algo 1}
\begin{algorithmic}[1]
\STATE \textbf{Input:} $X = \{x_1, x_2, \dots, x_n\}$
\STATE \textbf{Output:} Decoded sentence $\hat{X}$ and metrics (BLEU, $\mathrm{Sim}_i$, $\mathrm{Acc}_i$, $\mathrm{Comp}_i$)
\STATE Apply self-attention to obtain importance scores $\alpha_i$
\STATE Select subset $S \subseteq X$ based on $\alpha_i$ and $\tau_{\alpha}$
\STATE Refine $S$ using KG relations for semantic consistency
\STATE Generate semantic symbols via cGAN generator: $G(z|x_i)$, $\forall x_i \in S$
\STATE Transmit symbols through THz channel with noise and fading
\STATE Receive distorted symbols and decode via cGAN decoder $D(.)$
\STATE Reconstruct semantics $\hat{S}$ using KG
\STATE Aggregate reconstructed semantics $\Psi(\hat{S})$
\STATE Compare $\hat{S}$ with original $X$ via $\Phi(X)$
\STATE Compute BLEU, $\mathrm{Sim}_i$, $\mathrm{Acc}_i$, and $\mathrm{Comp}_i$
\IF{rate failure or semantic loss detected}
    \STATE Adapt RA and retransmit additional semantic representations
\ENDIF
\STATE Output $\hat{X}$ and final performance metrics
\end{algorithmic}
\end{algorithm}

\subsection*{Semantic Adaptive Extended Optimization Problem}
Motivated by the objective of meaning centric communication, we formulate a semantic adaptive optimization problem that explicitly accounts for semantic fidelity, distortion criticality, and information representability . Unlike traditional formulations that optimize raw transmission rates or symbol error metrics, the proposed problem seeks to identify an optimal subset of semantic representations that preserves the intended meaning while efficiently utilizing limited communication resources. Specifically, the formulation jointly balances information representability, distortion criticality, bandwidth occupancy, and channel reliability, thereby capturing the fundamental trade-offs inherent in SemCom. Based on the study of problem formulation, we present an extended optimization problem, as shown in~\eqref{Ext Opt.}, that rigorously models semantic transmission efficiency under practical constraints.
\begin{equation}
\label{Ext Opt.}
\min_{\bm{s} \in \{0,1\}^n} \&\quad 
\lambda_1 \left\| \Phi(X) - \Psi(S) \right\|_2^2
+ \lambda_2 \sum_{i=1}^n c_i s_i \\
\end{equation}
\begin{subequations}
\begin{align}
\text{s.t.} \quad 
& C_1: \sum_{i=1}^n s_i \le k_{\max}, \tag{14a}
\\
& C_2: \alpha_i \ge \tau_\alpha, \forall i : s_i=1, \tag{14b}
\\
& C_3: \mathbb{E}_{z \sim \mathcal{N}(0,1)}\!\left[\| x_i - D(G(z|x_i)) \|_2^2 \right] \le \epsilon, \forall i : s_i=1, \tag{14c}
\\
& C_4: KL(P_{\mathrm{src}}(x_i) \| P_{\mathrm{dest}}(x_i)) \le \delta, \forall i : s_i=1, \tag{14d}
\\
& C_5: \mathrm{Sim}(x_i, x_j) \le \tau_{\mathrm{sim}}, \forall i\neq j, : s_i = s_j = 1, \tag{14e}
\\
& C_6: \mathrm{SNR}(x_i) \ge \gamma_{\min}, \forall i : s_i=1 \tag{14f}
\end{align}
\end{subequations}
where (14a) is a bandwidth constraint that indicates the maximum number of semantics to be transmitted within the bandwidth budget, and (14b) represents a semantic importance threshold constraint that ensures only tokens with self-attention weights above a predefined threshold are selected. The constraint (14c) defines the cGAN reconstruction constraint, which indicates that the semantic symbols generated from cGAN must be close to the original semantics. Further, (14d) introduces a KG distribution alignment constraint. Here, $P_{\mathrm{src}}$ and $P_{\mathrm{dest}}$ denote probability distributions over the set of semantic $\mathcal{R}$ (or semantic triples) derived from the source and reconstructed KG's, respectively. These distributions represent the relative frequencies of semantic $\mathcal{R}$ appearing in the message-level KG. Due to semantic compression, channel distortion, and imperfect cGAN reconstruction, the set of recovered semantic $\mathcal{R}$ at the receiver may differ slightly from those extracted at the transmitter. The Kullback-Leibler (KL) divergence term therefore measures the discrepancy between the semantic relational structures of the two KG's. By constraining this divergence below a threshold, the proposed framework ensures that the overall relational semantics of the message remain consistent, even if some tokens are omitted or paraphrased during transmission. This formulation integrates KG aware semantic structure directly into the optimization problem, providing a principled mechanism to preserve semantic integrity beyond symbol level reconstruction and constituting an additional novelty aligned with the knowledge based SemCom paradigm. \emph{Semantic integrity} refers to the preservation of the core $\mathcal{E}$--$\mathcal{R}$ structure and contextual meaning of the original message after semantic selection, compression, transmission, and reconstruction, regardless of exact lexical or syntactic variations. The constraint (14e) describes the information redundancy constraint, where $\mathrm{Sim}(.,.)$ is a semantic similarity metric denotes the cosine similarity between contextual BERT embeddings that prevents redundant semantic transmission and promotes diversity in selected symbols. The constraint (14f) specifies the transmission reliability, which indicates that each semantic symbol must meet a minimum SNR for reliable transmission. Additionally, the term $s_i$ is multiplied with the  $\mathcal{L}_{\mathrm{DC}}$ in~\eqref{Ext Opt.}, to count the cost only for symbols that are selected, whereas $\lambda_1$ and $\lambda_2$ are weight parameters for $\mathcal{L}_{\mathrm{IR}}$ and $\mathcal{L}_{\mathrm{DC}}$, respectively.
\vspace{-4mm}
\section{Solution for Optimization Problem}
Given the complexity of the multi-constraint semantic selection problem, an exact combinatorial solution is infeasible within practical computational limits. Therefore, a complexity analysis is a must to show that the non-deterministic polynomial (NP) hard optimization problem when presented as a heuristic generates approximate results but is computationally tractable. Hence, we construct a systematic solution approach that first relaxes the discrete problem into a convex form and then leverages submodularity to enable efficient approximation. Subsequently, a Lagrangian dual framework is developed to manage constraint interactions, followed by Karush–Kuhn–Tucker (KKT) conditions to characterize optimality. The detailed steps are presented below.
\vspace{-2mm}
\subsection{Continuous Relaxation}
The original problem involves binary variables (select or not select semantic tokens), making it a combinatorial and non-convex optimization that is NP hard in general. Thus, we are relaxing the binary constraints to allow continuous values (between 0 and 1) which transform the problem into a convex or more tractable form. This enables the use of efficient convex optimization techniques that can find tractable solutions or lower bounds, which is time consuming and complex in the original discrete formulation. Relaxing the binary variables implies that
\begin{equation}
0 \le s_i \le b_i, \quad b_i = {1}_{\{\alpha_i \ge \tau_\alpha\}}    
\end{equation}

For each subset S, we aggregate: $\Psi(S) = \sum_{i=1}^n s_i u_i$ (linear aggregation):
\begin{equation}
\label{Cont. Rel}
\min_{{s} \in \mathbb{R}^n}
\lambda_1 \left\|\Phi(X) - \sum_i s_i u_i \right\|_2^2
+ \lambda_2 \sum_i c_i s_i\quad
\end{equation} 
\text{s.t. constraints as in~\eqref{Ext Opt.} hold with } $s_i\in [0,b_i]$.

The relaxed forms of new constraints are given as:
\begin{equation}
 \mathbb{E}_{z}\left[\| x_i - D(G(z|x_i)) \|_2^2 \right] s_i \le \epsilon, \quad s_i \in [0,b_i],   
\end{equation}
\begin{equation}
  s_i \times \mathrm{SNR}(x_i) \ge \gamma_{\min} s_i, \quad s_i \in [0,b_i]  
\end{equation}
The above relaxation effectively enforces reconstruction and SNR bounds on selected tokens, and sets the foundation for further analytical and algorithmic advances.

\begin{lemma}[Convexity verification (relaxed selection)]
Consider the relaxed problem with $0 \le s_i \le b_i$, 
where $b_i = \mathbf{1}\{\alpha_i \ge \tau_\alpha\}$, 
and linear aggregation 
$\Psi(S) = \sum_{i=1}^n s_i u_i = U\mathbf{s}$,
with $U = [u_1,\dots,u_n] \in \mathbb{R}^{d \times n}$, denotes the matrix of embeddings $u_i$.

The relaxed optimization problem is given by
\begin{equation}
\begin{aligned}
\min_{\mathbf{s} \in \mathbb{R}^n} \quad 
& \hat{F}(\mathbf{s}) 
= \lambda_1 \|\Phi(X) - U\mathbf{s}\|_2^2 
+ \lambda_2 \sum_{i=1}^n c_i s_i \\
\text{s.t.} \quad 
& 0 \le s_i \le b_i, \quad \forall i,
\end{aligned}
\label{Convexity Verification}
\end{equation}
together with the relaxed constraints defined in~\eqref{Ext Opt.}. If any additional surrogate terms (e.g., convex BLEU surrogates) are convex, then the full relaxed objective with the listed constraints constitutes a convex optimization program.
\end{lemma}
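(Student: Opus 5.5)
The plan is to verify the two ingredients of a convex program separately: the objective $\hat{F}$ is a convex function of $\mathbf{s}$, and the feasible set is an intersection of convex sets. Throughout, I will treat every quantity not depending on $\mathbf{s}$ as fixed data during selection: $U$, $\Phi(X)$, $c_i$, $\alpha_i$, the cGAN reconstruction errors $\mathbb{E}_z\|x_i - D(G(z|x_i))\|_2^2$, $\mathrm{SNR}(x_i)$, the KL terms, and the pairwise similarities. This amounts to freezing the generator and decoder parameters while $\mathbf{s}$ is optimized. I will also assume $\lambda_1,\lambda_2 \ge 0$, which is implicit in their role as weights.

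\emph{Objective.} Expanding the quadratic term gives
$\lambda_1\|\Phi(X)-U\mathbf{s}\|_2^2 = \lambda_1\big(\mathbf{s}^\top U^\top U\mathbf{s} - 2\Phi(X)^\top U\mathbf{s} + \|\Phi(X)\|_2^2\big)$.
Its Hessian is $2\lambda_1 U^\top U$, which is positive semidefinite because $\mathbf{v}^\top U^\top U\mathbf{v} = \|U\mathbf{v}\|_2^2 \ge 0$. The term $\lambda_2\sum_i c_i s_i$ is linear, hence convex. A nonnegative combination of convex functions is convex, and adding any further convex surrogate (e.g.\ a convex BLEU surrogate) preserves this by the same rule. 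I will remark that strict convexity holds only when $U$ has full column rank, which is not needed here.

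\emph{Constraints.} I will take the constraints one at a time.
\begin{itemize}
\item The box constraints $0\le s_i\le b_i$ absorb $C_2$, since $b_i=\mathbf{1}\{\alpha_i\ge\tau_\alpha\}$ is a constant.
\item $C_1$, namely $\sum_i s_i\le k_{\max}$, is a halfspace.
\item The relaxed $C_3$, namely $a_i s_i\le\epsilon$ with $a_i$ the constant reconstruction error, is linear.
\item The relaxed $C_6$ can be rewritten as $s_i(\mathrm{SNR}(x_i)-\gamma_{\min})\ge 0$, which is linear in $s_i$.
\item $C_4$ has a left-hand side independent of $\mathbf{s}$, so it only removes tokens. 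I will fold it into the upper bound by replacing $b_i$ with $b_i\cdot\mathbf{1}\{KL_i\le\delta\}$. This can equivalently be done for $C_3$ and $C_6$.
\end{itemize}
Each of these sets is a polyhedron, and their intersection is therefore a convex polyhedron.

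\emph{Main obstacle: $C_5$.} This constraint is the step I expect to be hardest, because the condition ``$s_i=s_j=1$'' has no canonical continuous meaning. A literal bilinear relaxation such as $s_is_j(\mathrm{Sim}_{ij}-\tau_{\mathrm{sim}})\le 0$ is nonconvex. I will first try the standard conflict-graph linearization: for every pair with $\mathrm{Sim}(x_i,x_j)>\tau_{\mathrm{sim}}$, impose $s_i+s_j\le 1$. This is exact on $\{0,1\}^n$, since it forbids precisely the selections that violate $C_5$, and it is a halfspace in the relaxation. With $C_5$ replaced in this way, the feasible set remains polyhedral. The relaxed problem then minimizes a convex quadratic, plus any convex surrogate terms, over a polyhedron, so it is a convex program, and in the absence of surrogates a convex QP. This establishes the lemma. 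I will state explicitly that the lemma holds under this interpretation of $C_5$.
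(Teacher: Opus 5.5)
Your proposal is correct and follows the same route as the paper's proof: the Hessian $2\lambda_1U^\top U\succeq 0$ makes the quadratic term convex, the cost term and the relaxed constraints are affine in $\mathbf{s}$, and so the relaxed problem is a convex objective over a polyhedron. You are more careful than the paper, whose proof lumps $C_4$, $C_5$ and $C_6$ into ``other budget/linear constraints''; the paper only later writes $C_5$ in the Lagrangian as $s_i+s_j\le 1+\tau'_{ij}$, which is essentially your conflict-graph linearization. One small quibble, which does not affect convexity: folding the relaxed $C_3$ into $b_i$ via an indicator is not equivalent on the relaxation, because $a_is_i\le\epsilon$ permits $0\le s_i\le\epsilon/a_i$ rather than forcing $s_i=0$.
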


\begin{proof}
The quadratic $\mathcal{L}_{\mathrm{IR}}$ term in \eqref{Convexity Verification} expands as $\|\Phi-U\mathbf{s}\|_2^2=\|\Phi\|_2^2-2\Phi^\top U\mathbf{s}+\mathbf{s}^\top(U^\top U)\mathbf{s}$, so its Hessian is $\nabla^2(\lambda_1\|\Phi-U\mathbf{s}\|_2^2)=2\lambda_1 U^\top U$. Since $U^{\top}U$ is positive semidefinite and $\lambda_1 \ge 0$, hence this term is convex. Further, the channel-cost term $\lambda_2\sum_i c_i s_i$ and the cGAN linearized constraint $C_3\le\epsilon$ are affine in $\mathbf{s}$ and thus convex. Also, the box constraints $0\le s_i\le b_i$ and other budget/linear constraints are convex sets. If additional non-convex evaluation metrics (e.g., BLEU) are incorporated into the optimization, they can be replaced by convex surrogate terms to maintain tractability. That means, any additional surrogate is assumed convex, resulting in a convex contribution. Therefore, the objective in \eqref{Ext Opt.} is a sum of convex functions, and the feasible set is an intersection of convex sets, implying that the relaxed problem is a convex program. Moreover, if $\lambda_1>0$ and $U^\top U\succ0$ on the active columns then the Hessian is positive definite and the objective is strictly convex, yielding a unique minimizer. 
\end{proof}
\vspace{-5mm}
\subsection{Submodularity and Approximability}
The objective function in \eqref{Ext Opt.} with its semantic representation error and additive cost terms, exhibits submodular properties—diminishing returns and modular additive parts. Under mild assumptions on the embedding geometry (e.g., nonnegative pairwise similarities $u_i^\top u_j \ge 0$), the resulting set function exhibits the diminishing returns property and is therefore submodular. This structure allows the use of greedy algorithms with theoretical approximation guarantees, specifically a $(1-1/e)$ approximation ratio, which is a strong theoretical guarantee for near-optimality, where $e$ represents the base of the natural logarithm. In other words, exploiting submodularity significantly reduces computational complexity and provides provable near-optimality for discrete selection problems under combinatorial constraints like matroids (e.g., semantic diversity) requirements and knapsacks (e.g., bandwidth limit). The bandwidth constraint (14a) imposes a cardinality/knapsack-type restriction on the total number of selected semantic tokens. The diversity constraint (14e), which prevents simultaneous selection of highly similar semantic tokens, induces an independence structure that can be modeled as a matroid constraint. The feasible region therefore corresponds to the intersection of a matroid and a knapsack constraint. This property guides the design of practical algorithms that are efficient, effective, and ensuring suitability for large scale semantic selection problems in 6G systems. 

\begin{lemma}
If $\Psi(S)=\sum_{i\in S} u_i$ is linear and pairwise similarities satisfy $u_i^\top u_j \ge 0$, then the set function
\begin{equation}
F(S) = \lambda_1 \|\Phi(X) - \Psi(S)\|_2^2 + \lambda_2 \sum_{i\in S} c_i
\end{equation}
is submodular. Consequently, maximizing $-F(S)$ under cardinality, matroid, or knapsack constraints provides a $(1 - 1/e)$ approximation guarantee via greedy or continuous greedy algorithms.
\end{lemma}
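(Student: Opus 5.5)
The plan is to verify the diminishing-returns property directly, by computing the marginal effect of adding one token to a selected set. Throughout I take the linear aggregation $\Psi(S)=\sum_{i\in S}u_i$ from the statement, with $\Phi:=\Phi(X)$ held fixed. For $S\subseteq X$ and $j\notin S$, I will expand the squared norm exactly as in the proof of the convexity lemma:
\begin{equation*}
F(S\cup\{j\})-F(S)=\lambda_1\Big(\|u_j\|_2^2-2\Phi^\top u_j+2\sum_{i\in S}u_i^\top u_j\Big)+\lambda_2 c_j .
\end{equation*}
The key feature is that the only dependence on $S$ is through the term $2\lambda_1\sum_{i\in S}u_i^\top u_j$. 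The quadratic part contributes only pairwise interactions, and the cost $\lambda_2\sum_{i\in S}c_i$ is modular, so it adds a constant $\lambda_2c_j$ to every marginal.

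Next I will define the marginal \emph{gain} of token $j$, i.e.\ the reduction in the loss $F$:
\begin{equation*}
\Delta_j(S):=F(S)-F(S\cup\{j\})=\lambda_1\big(2\Phi^\top u_j-\|u_j\|_2^2\big)-\lambda_2c_j-2\lambda_1\sum_{i\in S}u_i^\top u_j .
\end{equation*}
Take $A\subseteq B\subseteq X$ and $j\notin B$. Then
\begin{equation*}
\Delta_j(A)-\Delta_j(B)=2\lambda_1\sum_{i\in B\setminus A}u_i^\top u_j\ \ge 0 ,
\end{equation*}
because $\lambda_1\ge0$ and $u_i^\top u_j\ge0$ by hypothesis. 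So each additional semantic token yields a smaller reduction of the representability-plus-cost loss once more tokens are already selected. This is exactly the diminishing-returns property described in the paragraph preceding the lemma, and it is the sense in which I will establish that $F$ is submodular, as the lemma is worded. Equivalently, the objective that is actually maximized, $-F(S)$ (or the normalized gain $F(\emptyset)-F(S)$), has decreasing marginals. The same computation shows that the modular cost term never affects the inequality, so only the sign condition on the Gram entries matters.

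For the approximation claim, I will invoke the classical results for maximizing a set function with diminishing returns. These are the Nemhauser--Wolsey--Fisher greedy bound $(1-1/e)$ under the cardinality constraint $C_1$; Sviridenko's partial-enumeration greedy for the knapsack form of the bandwidth constraint; and the continuous greedy with pipage or swap rounding (C\u{a}linescu--Chekuri--P\'al--Vondr\'ak) for the matroid modeling $C_5$. In each case I apply them to $g(S)=F(\emptyset)-F(S)$, which is normalized, $g(\emptyset)=0$.

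The main obstacle is that these theorems also require the gains $\Delta_j(S)$ to be nonnegative, i.e.\ monotonicity of $g$. This is not automatic, since $\Delta_j(S)$ can become negative for large $S$ or large costs $c_j$. The first approach I would try is to argue that, on the feasible family restricted by $b_i=\mathbf{1}\{\alpha_i\ge\tau_\alpha\}$ and $C_1$, the greedy rule simply stops once the best remaining gain is nonpositive. Because gains only decrease, no later addition could become positive again, so the standard analysis goes through on the prefix it selects. If this fails, I would fall back on the non-monotone variants (e.g.\ measured continuous greedy), noting the weaker constant there.
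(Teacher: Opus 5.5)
Your diminishing-returns step is the paper's argument: both of you expand the square and show that the gain $\Delta_j(S)=F(S)-F(S\cup\{j\})$ satisfies $\Delta_j(A)-\Delta_j(B)=2\lambda_1\sum_{i\in B\setminus A}u_i^\top u_j\ge 0$. (The paper does this for the quadratic alone and then adds the modular cost.) Be explicit about what this proves. It is $-F$, equivalently $g(S)=F(\emptyset)-F(S)$, that is submodular. Your own first display shows that the marginals of $F$ \emph{increase} with $S$, so $F$ itself is supermodular. The lemma's first sentence therefore has the sign backwards; the paper's proof calls $F$ submodular and $-F$ monotone submodular in consecutive sentences. Say so outright rather than speaking of ``the sense in which $F$ is submodular, as the lemma is worded.'' Your normalization $g(\emptyset)=0$ is needed for any multiplicative guarantee and is absent from the paper.

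The genuine gap is in the approximation claim. You are right that every result you cite requires $g$ to be monotone, and that the hypotheses do not provide this; the paper simply asserts monotonicity ``under nonnegative weights.'' But your repair does not work. The greedy analysis rests on $g(O)\le g(S_t\cup O)\le g(S_t)+\sum_{o\in O}\Delta_o(S_t)$. The first inequality is monotonicity on sets $S_t\cup O$ that greedy never visits, and stopping once the best gain is nonpositive says nothing about them.

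Here is a concrete instance. Take $\lambda_1,\lambda_2>0$, small $\rho>0$, $u_{y_m}=e_m$ for $m=1,\dots,k$, $u_x=\tfrac12\sum_m e_m$, $\Phi(X)=\sum_m e_m$, $c_{y_m}=0$, $k_{\max}=k\ge 2$. Choose $c_x\ge0$ so that $\Delta_x(\emptyset)=\lambda_1(1+\rho)$; this is possible because $\lambda_1(2\Phi(X)^\top u_x-\|u_x\|_2^2)=\tfrac34\lambda_1 k$. All inner products are nonnegative and $\Delta_{y_m}(\emptyset)=\lambda_1$, so greedy picks $x$ first. Then $\Delta_{y_m}(\{x\})=\lambda_1-2\lambda_1u_x^\top e_m=0$. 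Greedy therefore finishes with value $\lambda_1(1+\rho)$, whether it stops or keeps adding zero-gain elements, whereas $g(\{y_1,\dots,y_k\})=\lambda_1 k$. The ratio $(1+\rho)/k$ is arbitrarily small. So under the stated hypotheses the greedy $(1-1/e)$ claim is false even for a plain cardinality constraint, not merely unproved; the continuous-greedy analyses use monotonicity in the same way.

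Your fallback to non-monotone algorithms cannot deliver $(1-1/e)$ either. It also needs $g\ge0$ on all sets, which fails, e.g., for $u_1=u_2=1.2\,\Phi(X)$. A correct version must either add a hypothesis making $g$ monotone, or settle for a weaker non-monotone constant. Monotonicity means $\Delta_j(S)\ge0$, which by your formula reads $\lambda_1\big(2\Phi(X)^\top u_j-\|u_j\|_2^2-2\sum_{i\in S}u_i^\top u_j\big)\ge\lambda_2 c_j$ for all $S$ and $j\notin S$.
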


\begin{proof}
We analyze each component of \eqref{Ext Opt.} as,
(i) Quadratic semantic representation error: with $\Psi(S)=\sum_{i\in S} u_i$, expansion gives $\|\Phi(X)\|^2 - 2\Phi(X)^\top \sum_{i\in S} u_i + \|\sum_{i\in S} u_i\|^2$. The corresponding marginal gain for an element $j$ is $\Delta_j(A) = 2\Phi(X)^\top u_j - 2(\sum_{i\in A} u_i)^\top u_j - \|u_j\|^2$. For $S \subseteq T$, $\Delta_j(S)-\Delta_j(T) = 2(\sum_{i\in T\setminus S} u_i)^\top u_j \ge 0$, proving diminishing returns and hence submodularity. (ii) Additive cost: $\sum_{i\in S} c_i$ is modular (linear in $s_i$), which preserves submodularity under addition, which means that the contribution of each semantic token is independent of the selection of other tokens. (iii) Combining terms: Since the sum of a submodular function and a modular function remains submodular, the overall objective $F(S)$ is therefore submodular. Its negation $-F(S)$ is monotone submodular under nonnegative weights, and under a cardinality or matroid/knapsack constraint, \cite{article} guarantees a $(1-1/e)$ approximation via greedy or continuous greedy algorithms. Here, the knapsack constraint captures the finite semantic transmission budget, while the matroid structure arises from semantic diversity and dependency constraints that restrict the joint selection of highly similar semantic tokens, ensuring coverage and redundancy control. 
\end{proof}

\begin{lemma}[Matroid intersection lemma (feasible augmentation under matroid intersection)]
Let $\mathcal{M}_1=(E,\mathcal{I}_1)$ and $\mathcal{M}_2=(E,\mathcal{I}_2)$ be two matroids on the same ground set $E=\{1,\dots,n\}$. If $A\subseteq E$ is independent in both matroids ($A\in\mathcal{I}_1\cap\mathcal{I}_2$) and $B\in\mathcal{I}_1$ with $|B|>|A|$, then there exists $b\in B\setminus A$ such that $A\cup\{b\}\in\mathcal{I}_1\cap\mathcal{I}_2$ or there exists $a\in A\setminus B$ with $B\cup\{a\}\in\mathcal{I}_1\cap\mathcal{I}_2$.
\end{lemma}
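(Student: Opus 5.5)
The plan is to test the statement against its hypotheses before trying to prove it. The hypothesis on $B$ involves only $\mathcal{M}_1$, while both conclusions require membership in $\mathcal{I}_2$. The natural argument would run as follows. Apply the augmentation axiom of $\mathcal{M}_1$ to $A,B\in\mathcal{I}_1$ with $|B|>|A|$ to get some $b\in B\setminus A$ with $A\cup\{b\}\in\mathcal{I}_1$. Then try to show that some such $b$ also keeps $A\cup\{b\}\in\mathcal{I}_2$. If no such $b$ exists, try to extract an $a\in A\setminus B$ with $B\cup\{a\}\in\mathcal{I}_1\cap\mathcal{I}_2$ using the circuit structure of $\mathcal{M}_2$: each rejected $b$ closes a unique circuit $C_2(A,b)\subseteq A\cup\{b\}$. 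The main obstacle is this second step. Nothing in the hypotheses ties $B$ to $\mathcal{M}_2$, so I would first look for a small counterexample.

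Such a counterexample exists, so the plan becomes to exhibit it and then state and prove a corrected version. Take $E=\{1\}$, let $\mathcal{M}_1$ be the free matroid, and let $\mathcal{M}_2$ have $1$ as a loop, so $\mathcal{I}_2=\{\emptyset\}$. With $A=\emptyset$ and $B=\{1\}$ we have $A\in\mathcal{I}_1\cap\mathcal{I}_2$, $B\in\mathcal{I}_1$ and $|B|>|A|$. However, $A\cup\{1\}\notin\mathcal{I}_2$ and $A\setminus B=\emptyset$, so neither alternative holds.

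Strengthening the hypothesis to $B\in\mathcal{I}_1\cap\mathcal{I}_2$ does not rescue it. Take bipartite matchings on the path $p\!-\!q\!-\!r\!-\!t$ with edges $e_1=pq$, $e_2=qr$, $e_3=rt$; this family is the intersection of two partition matroids, one per side of the bipartition. Let $A=\{e_2\}$ and $B=\{e_1,e_3\}$. Adding $e_1$ or $e_3$ to $A$ creates a shared vertex, and $B\cup\{e_2\}$ is not a matching. This reflects the standard fact that $\mathcal{I}_1\cap\mathcal{I}_2$ is generally not a matroid, so single-element augmentation cannot hold.

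The correct statement I would prove in its place is the exchange-graph augmentation lemma. Let $A\in\mathcal{I}_1\cap\mathcal{I}_2$ be such that some $B\in\mathcal{I}_1\cap\mathcal{I}_2$ has $|B|>|A|$. Then there is a shortest path $b_0,a_1,b_1,\dots,a_k,b_k$ in the exchange graph $D(A)$ from $X_1=\{b\notin A: A\cup\{b\}\in\mathcal{I}_1\}$ to $X_2=\{b\notin A: A\cup\{b\}\in\mathcal{I}_2\}$, and $A\,\triangle\,\{b_0,a_1,\dots,b_k\}\in\mathcal{I}_1\cap\mathcal{I}_2$. The proof would proceed in three steps.

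1. Show that if no such path exists, the set $U$ reachable-to-$X_2$ gives $|A|=r_1(E\setminus U)+r_2(U)\ge |B|$, a contradiction. This is the min–max argument.
2. Show that shortest paths have no shortcuts.
3. Apply the unique-perfect-matching exchange lemma in each of $\mathcal{M}_1$ and $\mathcal{M}_2$ to conclude that the symmetric difference is independent in both.

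For the semantic-selection application (diversity plus bandwidth constraints), this path-augmentation version is what actually supports greedy or local-search augmentation. I would therefore flag the statement as given as false and replace it with this version.
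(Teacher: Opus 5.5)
Your conclusion is correct, and your route differs from the paper's. The paper does not really argue the claim: it cites Edmonds' matroid intersection theorem and sketches ``an exchange on $\mathcal{M}_1$, followed by a repair step on $\mathcal{M}_2$ using alternating augmenting paths.'' Both of your counterexamples check out. In the first, $B\setminus A=\{1\}$ with $\{1\}\notin\mathcal{I}_2$, and $A\setminus B=\emptyset$. In the second, take the bipartition $\{p,r\}\cup\{q,t\}$. Then $\{e_1,e_2\}$ share $q$, $\{e_2,e_3\}$ share $r$, and $\{e_1,e_2,e_3\}$ is not a matching. Since $B$ is independent in both partition matroids, this example also refutes the literal statement. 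The paper's sketch fails exactly where you expected. Augmenting along a shortest $X_1$--$X_2$ path $P$ replaces $A$ by $A\triangle V(P)$, which in general removes elements of $A$ and inserts several new ones. So it never yields the single-element alternatives $A\cup\{b\}$ or $B\cup\{a\}$. Moreover, under the hypothesis $B\in\mathcal{I}_1$ alone, there need not be any larger common independent set to augment toward. Your corrected lemma is precisely what the cited machinery proves. It also fits the paper's informal gloss that one can ``augment or exchange'' while staying feasible. So your route replaces the paper's false single-element claim with a true path-exchange statement.

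There are two things to fix in your plan, plus one caveat on the application.
\begin{itemize}
\item \textbf{Step 1.} Suppose $U$ is the set of elements that can reach $X_2$. Then $X_2\subseteq U$, and any $z\in X_2$ gives $(A\cap U)\cup\{z\}\in\mathcal{I}_2$. Hence $r_2(U)>|A\cap U|$, and your identity cannot be tight. The correct certificate is $|A|=r_1(U)+r_2(E\setminus U)$. Equivalently, keep your formula but take $U$ to be the set reachable from $X_1$.
\item \textbf{Step 3.} Since $|A\triangle V(P)|=|A|+1$, you must adjoin a coloop $t$ to $A$ before applying the unique-perfect-matching lemma in $\mathcal{M}_1$. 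The exchange arc between $t$ and $b_0$ exists because $b_0\in X_1$. Minimality of $P$ gives $b_i\notin X_1$ for $i\ge 1$, which excludes arcs between $t$ and the other $b_i$. Do the same in $\mathcal{M}_2$ using $b_k$.
\item \textbf{Application.} Your closing remark inherits the paper's assumption that (14e) is a matroid constraint. A pairwise conflict constraint is the stable-set family of a graph. That family is a matroid only when the graph is a disjoint union of cliques. Your path example is exactly the obstruction: it is an induced $e_1$--$e_2$--$e_3$ conflict path. So even the corrected lemma does not cover (14e) in general.
\end{itemize}
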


\begin{proof}
This is a standard augmentation property consequence of the matroid exchange axiom applied to intersection structures as detailed by the Edmonds’ matroid intersection theorem \cite{Edmonds2003}. An exchange is first applied on $\mathcal{M}_1$, followed by a repair step on $\mathcal{M}_2$ using alternating augmenting paths in the auxiliary exchange graph. This matroid intersection lemma guarantees that, in formulated semantic selection problem, it is always possible to safely add or swap semantic tokens while respecting both diversity (14e)  constraints and semantic structure (14d) constraints, ensuring that greedy or augmentation algorithms progress correctly toward a near-optimal semantic subset.
\end{proof}

In the proposed formulation, the ground set $E$ corresponds to candidate semantic tokens, while $\mathcal{M}_1$ models diversity constraints (14e) that prevent redundant semantic selection, and $\mathcal{M}_2$ models structural or KG-based semantic integrity constraints (14d). This lemma guarantees that if a larger feasible semantic subset exists, then it is always possible to augment or exchange elements while maintaining feasibility under both constraints. This property ensures that greedy or augmentation-based algorithms can iteratively improve the semantic subset without being stuck in infeasible configurations. Consequently, the matroid intersection structure provides a theoretical foundation for the correctness and convergence of the proposed constrained semantic selection strategy.
\begin{algorithm}
\caption{Greedy semantic selection with multi-constraints}
\label{Algo 2}
\begin{algorithmic}[1]
\STATE \textbf{Word and token extraction:} 
Given input sentence $X = \{x_1, x_2, \dots, x_n\}$, decompose into tokens with embeddings $u_i$ and importance scores $\alpha_i$
\STATE \textbf{Semantic embedding setup:} 
Compute full semantic embedding $\Phi(X)$, and define the aggregated embedding for any selected subset $S$ as
$\Psi(S) = \sum_{i \in S} u_i$
\STATE \textbf{Binary variables:} 
Associate each token $x_i$ with a binary selection variable $s_i \in \{0,1\}$, where $s_i=1$ if $x_i$ is selected
\STATE \textbf{Information representability loss $(\mathcal{L}_{\mathrm{IR}})$:} 
Compute the $(\mathcal{L}_{\mathrm{IR}})$
\[
\mathcal{L}_{\mathrm{IR}}= \left\| \Phi(X) - \Psi(S) \right\|_2^2
\]
if the information representability is low, $\mathcal{L}_{\mathrm{IR}}$ is high
\STATE \textbf{Distortion criticality Loss $(\mathcal{L}_{\mathrm{DC}})$:} 
Compute the $(\mathcal{L}_{\mathrm{DC}})$
\[
    \mathcal{L}_{\mathrm{DC}}= \sum_{i=1}^n\Big[\beta_i \Big[ \mathbb{E}_{z} \| x_i - D(G(z|x_i))\|_2^2 + \kappa \, \mathrm{SNR}(x_i)^{-1} \Big]\Big]
\]
lower is better.
\STATE \textbf{Candidate filtering:}
Define the candidate set
\[
\mathcal{C} \gets \left\{ i \;\middle|\;
\begin{aligned}
\alpha_i &\ge \tau_\alpha, \\
\mathbb{E}_{z}\!\left[\| x_i - D(G(z|x_i))\|_2^2 \right] &\le \epsilon, \\
\mathrm{SNR}(x_i) &\ge \gamma_{\min}, \\
KL_i &\le \delta
\end{aligned}
\right\}
\]

\STATE Initialize selected set $S \gets \emptyset$

\WHILE{$|S| < k_{\max}$ \textbf{and} $\mathcal{C} \neq \emptyset$}
    \FOR{each $i \in \mathcal{C}$}
        \STATE \textbf{Check diversity constraint:} ensure 
        $\mathrm{Sim}(x_i,x_j) \le \tau_{\mathrm{sim}}$ for all $j \in S$; 
        if violated, mark $i$ as infeasible and skip

        \STATE \textbf{Compute marginal gain:}
        \[
         \begin{aligned}
              \Delta_i =
              & \lambda_1 \Big( \|\Phi(X)-\Psi(S)\|_2^2 \\
              & - \|\Phi(X)-\Psi(S \cup \{i\})\|_2^2 \Big) \\
              & - \lambda_2 c_i
              \end{aligned}
        \]
    \ENDFOR
    \STATE Select $i^* = \arg\max_{i \in \mathcal{C}} \Delta_i$ among feasible candidates. The greedy step maximizes $\Delta_i$, which represents the net weighted reduction in the $(\mathcal{L}_{\mathrm{IR}})$ obtained by adding token $x_i$, penalized by $c_i$.
    \STATE Update $S \gets S \cup \{ i^* \}$ and $\mathcal{C} \gets \mathcal{C} \setminus \{ i^* \}$
\ENDWHILE
\STATE \textbf{return} $S$
\end{algorithmic}
\end{algorithm}

\begin{theorem}[Greedy guarantee with matroid + knapsack constraints]
Suppose $g(S)$ is nonnegative monotone submodular and let the feasible family be $\mathcal{F}=\mathcal{I}\cap\{S:\sum_i w_i s_i \le W\}$ where $\mathcal{I}$ is a matroid family and $w_i\ge0$. Then the continuous greedy + swap rounding scheme yields a $(1-1/e-\varepsilon)$ approximation to $\max_{S\in\mathcal{F}} g(S)$ for any $\varepsilon>0$, in polynomial time (polynomial in $n$ and $1/\varepsilon$).
\end{theorem}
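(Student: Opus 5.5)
The plan follows the standard three-stage template for monotone submodular maximization over a matroid intersected with a knapsack: enumerate, run continuous greedy, then round.

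\emph{Stage 1: partial enumeration.} Fix $\varepsilon>0$. We guess the set $H$ of the $O(1/\varepsilon^{c})$ heaviest-weight elements of an optimal solution $S^\star$, with $c$ a small constant depending on the analysis. Since $|H|$ is a constant, there are only $n^{O(1/\varepsilon^{c})}$ guesses. For each guess we contract $\mathcal{I}$ by $H$. We then discard every remaining element $i$ with $w_i>\varepsilon^{c}\,(W-w(H))$, so every surviving item is small relative to the residual budget. Finally we pass to the residual function $g_H(A)=g(A\cup H)-g(H)$, which is again nonnegative, monotone and submodular.

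\emph{Stage 2: continuous greedy on the multilinear extension.} Let $G(\mathbf{x})=\mathbb{E}[g_H(R(\mathbf{x}))]$ be the multilinear extension. The relaxed polytope is $P=P(\mathcal{I}/H)\cap\{\mathbf{x}\ge 0:\sum_i w_ix_i\le (1-\varepsilon)(W-w(H))\}$. It is down-closed and admits an efficient linear-optimization oracle, since linear programs over a matroid polytope plus one linear constraint are solvable in polynomial time. We run continuous greedy with step size $\delta=\mathrm{poly}(\varepsilon/n)$ and estimate $\nabla G$ by sampling. The standard argument uses monotonicity and submodularity to get, at each step, $\langle \nabla G(\mathbf{x}),\mathbf{1}_{S^\star\setminus H}\rangle\ge \mathrm{OPT}_H-G(\mathbf{x})$. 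Integrating this inequality gives $G(\mathbf{x}(1))\ge(1-1/e-O(\varepsilon))\,\mathrm{OPT}_H$ with high probability. The scaling of the budget by $1-\varepsilon$ costs only a factor $(1-\varepsilon)$, because $(1-\varepsilon)\mathbf{1}_{S^\star\setminus H}$ is still feasible and $G$ is concave along nonnegative directions.

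\emph{Stage 3: swap rounding.} We apply randomized swap rounding (Chekuri--Vondr\'ak--Zenklusen) with respect to the matroid $\mathcal{I}/H$ to $\mathbf{x}(1)$. This yields an independent set $R$ with $\mathbb{E}[g_H(R)]\ge G(\mathbf{x}(1))$. Swap rounding also gives negative correlation of the coordinates, hence Chernoff-type lower-tail bounds for submodular functions and upper-tail bounds for linear functions. Since every weight is at most $\varepsilon^{c}(W-w(H))$ and the expected weight is at most $(1-\varepsilon)(W-w(H))$, the knapsack is violated with probability $\exp(-\Omega(\varepsilon^{2-c}))$, which is at most $\varepsilon$ for suitable $c$. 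In the rare violation event we can either drop a few elements or rerun the rounding. Returning $H\cup R$ for the best guess then gives $g(H\cup R)\ge(1-1/e-O(\varepsilon))g(S^\star)$, and rescaling $\varepsilon$ gives the stated bound. The running time is polynomial in $n$ for each fixed $\varepsilon$: there are $n^{O(1/\varepsilon^c)}$ enumerations, each running a polynomial continuous greedy plus a polynomial rounding step. This is the sense of ``polynomial in $n$ and $1/\varepsilon$'' claimed in the statement, since the degree in $n$ depends on $1/\varepsilon$.

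\emph{Main obstacle.} The delicate step is Stage 3. We must show that conditioning on knapsack feasibility, or pruning after a violation, does not destroy the expected value. I would first try the concentration route: combine the Chekuri--Vondr\'ak--Zenklusen lower tail $\Pr[g_H(R)\le(1-\eta)\mu]\le e^{-\mu\eta^2/8}$ with the upper tail for the weight, and take a union bound. If that fails because $\mu$ is small relative to the maximum marginal value, I would also guess the $O(1/\varepsilon)$ elements of $S^\star$ with the largest marginal values in Stage 1. This makes every remaining marginal small compared with $\mathrm{OPT}_H$ and restores the concentration bound. Nonnegativity of $g$ is needed so that both the ratio statement and the concentration bound are meaningful. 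For the paper's $F$ this means applying the theorem to a shifted function such as $g(S)=F(\emptyset)-F(S)$, and monotonicity of this shifted function must be assumed separately.
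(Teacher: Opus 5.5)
Your skeleton matches the paper's. The paper's proof only cites Vondr\'ak's continuous greedy and Chekuri--Vondr\'ak--Zenklusen swap rounding, and asserts that swap rounding keeps both the matroid and the knapsack feasible at an $\varepsilon$ cost ``from discretization''. You rightly see that swap rounding by itself preserves only the matroid. Enumeration, a $(1-\varepsilon)$-scaled budget and concentration are therefore genuinely needed, and your Stages~2--3 are the right machinery.

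The gap is in Stage~1. Guessing the \emph{heaviest} elements of $S^\star$ does not make the elements of $S^\star\setminus H$ small relative to the \emph{residual} budget $W-w(H)$. So your pruning can delete elements of $S^\star$, and the comparison point $(1-\varepsilon)\mathbf{1}_{S^\star\setminus H}$ used in Stage~2 need not lie in your polytope. For example, take $W=1$, the free matroid, and $S^\star=\{e_1,\dots,e_m\}$ with $w(e_j)=2^{-j}$. After the $t$ heaviest are removed, the residual budget is $2^{-t}$, while $e_{t+1}$ weighs $2^{-t-1}$, half of it, so $e_{t+1}$ is pruned. If $g$ is modular with almost all its value on $e_{t+1}$, the guess your analysis relies on keeps almost nothing of $g(S^\star)$.

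The repair is the device you hold in reserve, but it must be the primary choice of $H$, and with a larger count. Enumerate ordered sets of size at most $h$, and include each guessed $H$ itself among the candidates. In the analysis, let $H=(o_1,\dots,o_h)$ be the first $h$ elements of $S^\star$ in greedy order of marginal values, so that $g_H(e)\le g(S^\star)/h$ for all $e\in S^\star\setminus H$. Since $w(S^\star\setminus H)\le W-w(H)$, fewer than $\varepsilon^{-c}$ of these elements exceed $\varepsilon^{c}(W-w(H))$. By submodularity, pruning them costs at most $\varepsilon^{-c}g(S^\star)/h$. Also prune every ground element $e$ with $g_H(e)>g_{H\setminus\{o_h\}}(o_h)$. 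This removes nothing from $S^\star$ and caps every surviving marginal at $g(S^\star)/h$, which is exactly what the normalized lower-tail bound needs.

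With your $h=O(1/\varepsilon)$, neither step closes. The exponent $\mu\eta^2/(8\max_e g_H(e))$ is only $O(\varepsilon)$ for $\eta=\varepsilon$, and the pruning-loss bound exceeds $g(S^\star)$. With, e.g., $c=3$ and $h=\varepsilon^{-4}$, the pruning loss is at most $\varepsilon\,g(S^\star)$. In the only nontrivial case $g_H(S^\star\setminus H)\ge 2\varepsilon\,g(S^\star)$, both tail probabilities are then $e^{-\Omega(1/\varepsilon)}$.

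Finally, this gives running time $n^{O(\varepsilon^{-4})}$, which is polynomial for fixed $\varepsilon$ but not in $1/\varepsilon$. Your reading of the statement therefore proves a weaker, PTAS-type time bound than the literal claim. The paper's two-line proof does not address this either.
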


\begin{proof}
The continuous greedy algorithm of Vondrák \cite{10.1145/1374376.1374389} 
constructs a fractional solution $x$ for the multilinear extension of $g$ that achieves a $(1-1/e)$ approximation factor. The swap rounding scheme of \cite{5671314} then converts $x$ into an integral solution $S \in \mathcal{F}$ while preserving the feasibility under the constraints of the matroid and the knapsack while incurring only a small loss, $\varepsilon$, which arises from discretization. Hence, the overall approximation factor is $(1-1/e-\varepsilon)$ following the classical analyses in the submodular optimization literature\cite{article}.
\end{proof}

This theorem guarantees that a hard combinatorial semantic-selection problem with matroid constraints (semantic structure, diversity) and knapsack constraints (bandwidth) can be solved efficiently, achieving at least 63\% of the optimal value via continuous greedy and swap rounding—the best possible polynomial-time guarantee for submodular objectives. In our case, the feasible set is the intersection of matroid constraints ($C_4$ and $C_5$) with a knapsack constraint ($C_1$). This exactly matches the theoretical framework that allows a $(1-1/e-\varepsilon)$  approximation using continuous greedy and swap rounding.

The algorithm \ref{Algo 2} is the classical greedy algorithm for submodular minimization/maximization, and it has been proven earlier that our function is submodular. At every step, the algorithm selects the token whose addition reduces semantic distortion the most, while accounting for its transmission cost.

\begin{algorithm}
\caption{Lagrangian dual ascent for relaxed semantic selection problem}
\label{Algo 3}
\begin{algorithmic}[1]
\STATE Initialize multipliers $\lambda, \mu_i, \nu_{ij}, \eta_i, \theta_i \gets 0$
\WHILE{not converged}
    \STATE Solve convex quadratic program (QP) for ${s}$ given current multipliers
    \STATE Update dual variables:
    \STATE $\lambda \gets [\lambda + \eta (\sum_i s_i - k_{\max})]_+$
    \STATE $\mu_i \gets [\mu_i + \eta (KL_i - \delta)]_+, \quad \forall i$
    \STATE $\nu_{ij} \gets [\nu_{ij} + \eta (s_i+s_j - (1+\tau'_{ij}))]_+, \quad \forall i<j$
    \STATE $\eta_i \gets [\eta_i + \eta (\mathbb{E}_{z}\left[\| x_i - D(G(z|x_i))\|_2^2 \right] s_i - \epsilon)]_+, \quad \forall i$
    \STATE $\theta_i \gets \left[\theta_i + \eta \left(s_i(\gamma_{\min} - \mathrm{SNR}(x_i))\right)\right]_+, \quad \forall i$
\ENDWHILE
\STATE Threshold/round ${s}$ to obtain discrete set $S$
\end{algorithmic}
\end{algorithm}
\vspace{-4mm}
\subsection{Lagrangian Dual Formulation}
Many constraints in \eqref{Ext Opt.} are complex and coupled, making direct constrained optimization difficult. Formulating the Lagrangian dual problem converts constrained optimization into an unconstrained dual maximization problem involving Lagrange multipliers \cite{boyd2004convex}. This enables iterative dual ascent methods, where primal variables and multipliers are updated alternately, simplifying constraint handling.

The Lagrangian for the relaxed problem as expressed in~\eqref{Cont. Rel}, including SNR and other constraints, is shown in~\eqref{Lang}

\begin{equation}
\begin{split}
\mathcal{L}(s, \lambda, \mu, \nu, \eta, \theta)  =\;
& \lambda_1 \Big\|
\Phi(X) - \sum_{i=1}^n s_i u_i
\Big\|_2^2 \\
& + \lambda_2 \sum_{i=1}^n c_i s_i \\
& + \lambda \left(\sum_{i=1}^n s_i - k_{\max}\right) \\
& + \sum_{i=1}^n \mu_i KL_i
- \sum_{i=1}^n \mu_i \delta \\
& + \sum_{i<j} \nu_{ij} (s_i + s_j - 1 - \tau'_{ij}) \\
& + \sum_{i=1}^n \eta_i \Big(
\mathbb{E}_z [ \|x_i - D(G(z|x_i))\|_2^2 ]
- \epsilon \Big) \\ s_i
& + \sum_{i=1}^n \theta_i
( \gamma_{\min} - \mathrm{SNR}(x_i) ) s_i
\end{split}
\label{Lang}
\end{equation}
where, Lagrange multipliers $\lambda, \mu_i, \nu_{ij}, \eta_i, \theta_i \ge 0$ for respective constraints. We update them iteratively using constraint violations (dual ascent/subgradient), and at convergence, their values signify the constraints which are active and how strongly they influence the optimal semantic selection.

Having constructed the Lagrangian, the next step is to characterize how the
primal variables and the Lagrange multipliers interact. Specifically, for any
fixed set of multipliers, the Lagrangian evaluates the primal objective together with linear penalties for constraint violations. To understand how these penalties shape the feasible region and influence the optimal semantic
selection, we introduce the \emph{dual function}, obtained by minimizing the
Lagrangian over the primal variables. The dual function captures the lower bound on the primal objective attainable for a given choice of multipliers, and forms the basis for the dual maximization problem \cite{boyd2004convex}.

\subsubsection*{Dual Function} 
The dual function is defined as
\begin{equation}
\label{Dual F}
g(\lambda, {\mu}, {\nu}, {\eta}, {\theta}) = \min_{0 \le s_i \le b_i} \mathcal{L}({s}, \lambda, {\mu}, {\nu}, {\eta}, {\theta})  
\end{equation}
which provides the lower bound of the primal objective for fixed dual variables. Thus, $g$ measures how well the multipliers penalize constraint 
violations.

The dual function defined in~\eqref{Dual F} is concave in its arguments, regardless of whether the primal problem is convex\cite{boyd2004convex}. This property enables the use of efficient dual ascent or subgradient methods to iteratively update the Lagrange multipliers associated with multi-constraints. As a result, the proposed framework can handle complex coupled constraints while maintaining computational tractability and convergence guarantees at the dual level.

\subsubsection*{Dual Problem} 
It is known that, the dual function computes the lower bound on the primal objective for lagrange multipliers. Because the dual function is a lower bound, the dual problem seeks the \emph{best} such bound by maximizing $g$. The dual problem is then given as
\begin{equation}
\label{Dual P}
\max_{\substack{\lambda, {\mu}, {\nu}, \\ {\eta}, {\theta} \ge 0}} g(\lambda,{\mu}, {\nu}, {\eta}, {\theta})
\end{equation}
which seeks the tightest (largest) lower bound of the primal objective. The inner minimization over the primal optimization variable ${s}$ and the outer maximization over the multipliers thus naturally arise from lagrangian duality theory. Thus, even though the primal problem is a minimization, the corresponding Lagrangian dual problem is a maximization problem. The algorithm \ref{Algo 3} outlines the Lagrangian dual ascent approach for the relaxed semantic selection problem.

\subsubsection*{Weak and Strong Duality}
For the relaxed semantic selection problem expressed in~\eqref{Cont. Rel}, the primal objective is convex in $s$ and all constraints are affine after relaxation. Therefore, the problem constitutes a convex program. From the standard convex duality theory \cite{boyd2004convex}, weak duality always holds, which implies that the dual function provides a lower bound on the optimal primal objective value. Furthermore, if Slater’s condition \cite{boyd2004convex}, is satisfied (i.e., there exists a strictly feasible $s$ satisfying all inequality constraints with strict inequality), then strong duality holds and the duality gap is zero. In this case, the optimal
primal value equals the optimal dual value, and the primal–dual optimal pair satisfies the KKT conditions.

Strong duality ensures that solving the dual maximization problem is equivalent to solving the relaxed primal semantic selection problem. Consequently, the optimal Lagrange multipliers quantify the sensitivity of the optimal objective to bandwidth, semantic consistency,
information redundancy, cGAN reconstruction, and SNR constraints. This provides both computational tractability and interpretability. Specifically, the multipliers indicate which physical or semantic constraints are active and how strongly they influence optimal semantic resource allocation. We leverage duality to interpret semantic–physical trade-offs and to design tractable algorithms.
\vspace{-4mm}
\subsection{KKT Conditions}
The KKT conditions provide the fundamental optimality characterization for the relaxed semantic selection problem shown in \eqref{Convexity Verification}. For convex problems satisfying the Slater’s condition, these conditions are necessary and
sufficient for optimality. It characterizes stationary points where primal feasibility, dual feasibility, and complementary slackness hold simultaneously. Verifying or enforcing the KKT conditions helps to ensure that the obtained solution is optimal or close to optimal.

Let 
\begin{equation}
r = \Phi(X) - \sum_{i=1}^n s_i u_i,
\label{residual}
\end{equation}
which denotes the semantic representation residual. Differentiating the Lagrangian with respect to each selection variable $s_i$, gives the stationarity condition as shown below
\begin{equation}
\label{eq:kkt_stationarity}
\begin{aligned}
\nabla_{s_i}\mathcal{L} =\;&
-2\lambda_1\, u_i^\top r
+ \lambda_2 c_i
+ \lambda
+ \sum_{j\neq i} \nu_{ij} \\
&\quad
+ \eta_i\!\left(
\mathbb{E}_{z}\!\left[\|x_i - D(G(z|x_i))\|_2^2\right] - \epsilon
\right) \\
&\quad
+ \theta_i\left(\gamma_{\min} - \mathrm{SNR}(x_i)\right)
= 0.
\end{aligned}
\end{equation}
The remaining KKT conditions consist of: (i) Primal feasibility: all relaxed constraints (bandwidth, semantic consistency, information redundancy, cGAN reconstruction, and SNR reliability) must be satisfied by $s$. (ii) Dual feasibility: multipliers $\lambda,\, \mu_i,\, \nu_{ij},\, \eta_i,\, \theta_i \ge 0$. (iii) Complementary slackness: each multiplier must vanish unless its associated constraint is active. Together, these conditions characterize any optimal solution of the relaxed problem and guide the design of iterative primal–dual update methods in the semantic transmission framework. Algorithm \ref{Algo 4} presents the primal–dual optimization framework based on KKT conditions.

\begin{algorithm}
\caption{Primal--dual optimization with KKT conditions}
\label{Algo 4}
\begin{algorithmic}[1]
\REQUIRE Embeddings $U$, target vector $\Phi$, costs $c_i$, bandwidth budget $k_{\max}$, SNR threshold $\gamma_{\min}$, tolerances $\varepsilon_{\mathrm{KKT}}, \varepsilon_{\mathrm{cons}}$, step size $\eta>0$
\STATE Initialize $s^{(0)}$ feasible; multipliers $(\lambda,\mu_i,\nu_{ij},\eta_i,\theta_i)\gets 0$
\FOR{$t=0,1,\dots,T_{\max}$}
  \STATE Compute residual $r^{(t)} = \Phi - U s^{(t)}$
  \STATE \textbf{Primal update:}
  \[
  s^{(t+1)} = \arg\min_{0\le s_i\le b_i,\ \sum s_i\le k_{\max}}
  \mathcal{L}(s,\lambda,\mu,\nu,\eta,\theta)
  \]
  using a projected gradient or QP solver.
  \STATE Update residual $r^{(t+1)} = \Phi - U s^{(t+1)}$
  \STATE Compute constraint violations:
  \[
  \begin{aligned}
  &C_1 = \sum_i s_i - k_{\max}\\
  &C_{4,i} = KL_i - \delta\\
  &C_{5,ij} = s_i + s_j - (1+\tau'_{ij})\\
  &C_{3,i} = (E_i - \epsilon)s_i\\
  &C_{6,i} = s_i(\gamma_{\min} - \mathrm{SNR}(x_i))
  \end{aligned}
  \]
  \STATE \textbf{Stationarity check:} for each $i$, evaluate
  \[
  \begin{aligned}
  \nabla_{s_i}\mathcal{L}
  = &-2\lambda_1 u_i^\top r^{(t+1)}
  + \lambda_2 c_i
  + \lambda
  + \sum_{j\neq i} \nu_{ij} \\
  &\quad
  + \eta_i (E_i - \epsilon)
  + \theta_i (\gamma_{\min} - \mathrm{SNR}(x_i))
  \end{aligned}
  \]
  \STATE \textbf{Dual ascent:}
  \[
  \begin{aligned}
  &\lambda \gets [\lambda + \eta\, C_1]_+\\
  &\mu_i \gets [\mu_i + \eta\, C_{4,i}]_+\\
  &\nu_{ij} \gets [\nu_{ij} + \eta\, C_{5,ij}]_+\\
  &\eta_i \gets [\eta_i + \eta\, C_{3,i}]_+\\
  &\theta_i \gets [\theta_i + \eta\, C_{6,i}]_+
  \end{aligned}
  \]
  \IF{$\max_i |\nabla_{s_i}\mathcal{L}| \le \varepsilon_{\mathrm{KKT}}$ \textbf{and} all constraint violations, $C\le \varepsilon_{\mathrm{cons}}$}
      \STATE \textbf{stop}
  \ENDIF
\ENDFOR
\STATE \textbf{Rounding:} convert relaxed $s$ to binary by thresholding or top-$k$, , then repair to satisfy bandwidth budget and diversity.
\STATE \textbf{Return} feasible solution $S=\{i : s_i=1\}$
\end{algorithmic}
\end{algorithm}

Overall, the methodological steps detailed in this section convert the discrete combinatorial optimization problem into a continuous tractable framework, leverage the submodular structure for efficient approximation and efficient algorithms, incorporate complex constraints via Lagrangian duality, and establish rigorous optimality conditions. Together, they provide a theoretical and practical framework for obtaining better quality, provably near optimal solutions to the SemCom optimization problem.
\vspace{-4mm}
\section{Simulation Results and Analysis}
This section presents a detailed analysis of the performance evaluation for the proposed SA--cGAN model. The text dataset used in our work consisted of sentences ranging from 5 to 30 words, with a vocabulary size of 6,267. The performance of the proposed framework is evaluated across multiple channel conditions and compared with Huffman+Turbo \cite{Qin2022Semantic}, joint source--channel coding (JSCC) \cite{8461983}, and a fixed-length (5-bit) + Reed Solomon (RS), Huffman+RS, 5-bit + Turbo \cite{Xie2021} coding baselines. Fig.~\ref{BLEU} illustrates the BLEU performance of different communication schemes as a function of SNR. At low SNR regimes (0--5 dB), the proposed SA--cGAN achieves BLEU scores in the range of \(\approx 0.01\text{--}0.03\), which are comparable to Huffman + Turbo (\(\approx 0.02\text{--}0.03\)), 5 Bit + RS (\(\approx 0.02\text{--}0.04\)), and 5 Bit + Turbo (\(\approx 0.01\text{--}0.03\)), but significantly lower than JSCC, which maintains a relatively stable BLEU score of \(\approx 0.42-0.46\) in this region. At moderate SNR values (7.5--10 dB), SA--cGAN shows a gradual increase in BLEU from \(\approx0.04\) to \(\approx0.1\), closely following the performance of traditional source--channel coding schemes, while JSCC remains nearly constant at \(\approx 0.46\text{--}0.47\). As the SNR exceeds 12.5 dB, the semantic advantage of SA--cGAN becomes more pronounced. At 15 dB, SA--cGAN achieves a BLEU score of \(\approx 0.43\), outperforming Huffman + Turbo (\(\approx 0.26\)), 5 Bit + Turbo (\(\approx 0.16\)), 5 Bit + RS (\(\approx 0.31\)) and Huffman + RS (\(\approx 0.36\)). In the high SNR regime (17.5--20 dB), SA--cGAN shows a significant improvement, reaching BLEU scores of \(\approx 0.61\) at 17.5 dB and \(\approx 0.72\) at 20 dB, thus surpassing Huffman + Turbo (\(\approx 0.43\text{--}0.53\)), 5 Bit + RS (\(\approx 0.46\text{--}0.55\)), Huffman + RS (\(\approx 0.49\text{--}0.60\)), and JSCC (\(\approx 0.46\text{--}0.47\)). These results highlight that BLEU, which measures structural n-gram similarity, is not the primary strength of SA--cGAN, as the system prioritizes semantic meaning preservation rather than exact syntactic reconstruction.

\begin{figure}
	\begin{center}
		\includegraphics[width=1\linewidth]{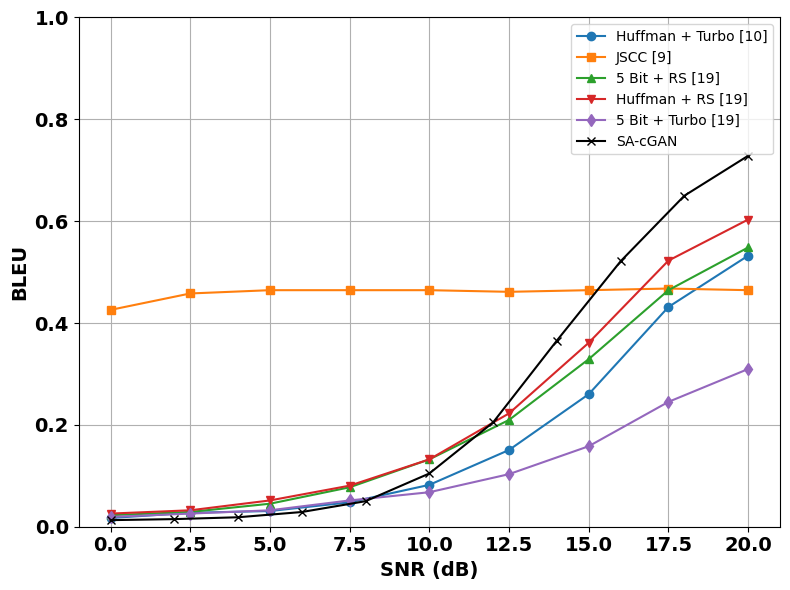}
        \vspace{-6mm}
		\caption[System model]{BLEU score comparison of the proposed SA-cGAN model with existing approaches}
        \vspace{-6mm}
		\label{BLEU}
	\end{center}
\end{figure}

Fig.~\ref{SEM ACC/COM} shows that both semantic accuracy (token-level precision) and semantic completeness (token-level recall) follow a smooth and monotonic trend with increasing SNR. At 0~dB, the system achieves a semantic accuracy of \(\approx 0.19\) and a semantic completeness of \(\approx 0.18\), indicating that even under severe noise conditions, the core semantic elements remain partially recoverable. As SNR increases, both semantic accuracy and semantic completeness improve monotonically, reaching high values at larger SNR levels, which confirms improved semantic reconstruction under reliable channel conditions. The tight coupling of the two curves confirms that the proposed SA--cGAN model effectively balances token-level precision (semantic accuracy) and token-level recall (semantic completeness). This behavior reflects the effectiveness of the self-attention-based semantic extractor in prioritizing semantically important tokens and the role of KG-aligned representations in preserving contextual relationships across varying channel conditions and cGAN reconstruction constraint. Notably, the absence of large gaps between semantic accuracy and semantic completeness curve indicates minimal semantic hallucination and limited loss of essential information, even when aggressive semantic compression is applied. Compared to BLEU, these metrics demonstrate that SA--cGAN preserves meaning more reliably than surface linguistic structure, aligning precisely with the goals of SemCom.
\begin{figure}
	\begin{center}
		\includegraphics[width=1\linewidth]{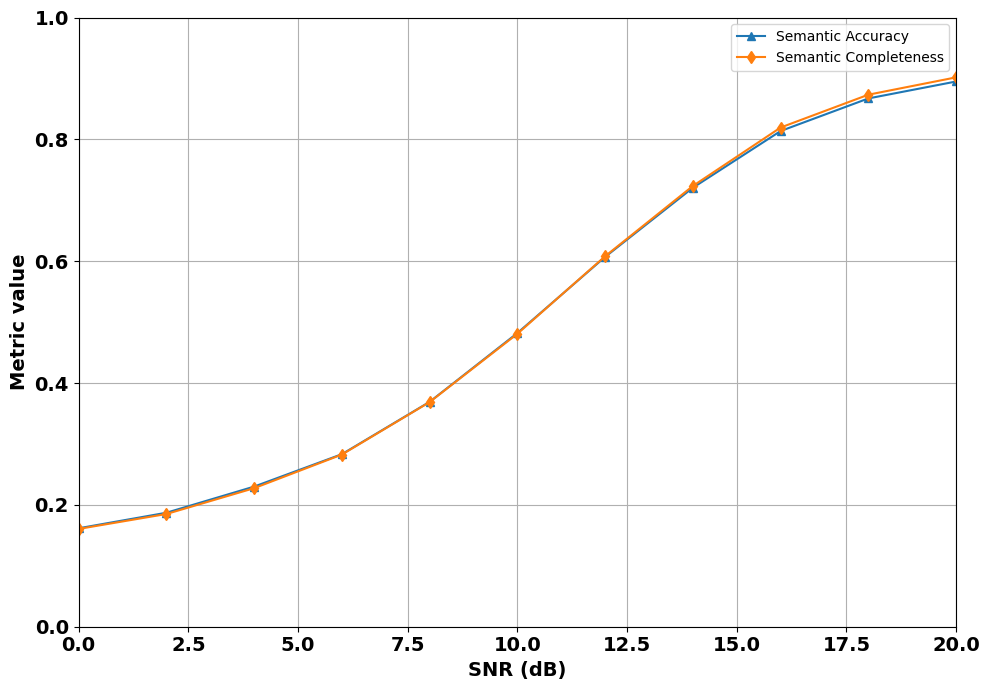}
        \vspace{-6mm}
		\caption[System model]{Semantic accuracy and semantic completeness versus SNR of the proposed SA–cGAN model}
        \vspace{-8mm}
		\label{SEM ACC/COM}
	\end{center}
\end{figure}
Fig.~\ref{Sentence Similarity} illustrates the semantic similarity performance as a function of SNR using contextual BERT embeddings. Unlike BLEU, the semantic similarity metric captures the meaning, synonyms, and contextual semantic alignment. The results show that semantic similarity increases with SNR in all methods, reflecting improved semantic reconstruction under better channel conditions. In the low SNR regime, SA--cGAN significantly outperforms traditional coding schemes, which exhibit near-complete semantic degradation, while JSCC achieves moderate performance. As SNR increases, SA--cGAN continues to improve steadily due to its adaptive semantic representation mechanism, eventually surpassing baseline methods and achieving near-perfect semantic preservation at high SNR. Unlike conventional approaches, which focus on symbol recovery, the proposed method enables robust and adaptive meaning preservation, demonstrating its suitability for SemCom in 6G systems. Although JSCC achieves higher semantic similarity in certain SNR regimes, it operates by transmitting dense representations without explicit semantic selection. In contrast, the proposed SA–cGAN framework performs adaptive semantic compression by selecting and transmitting only the most informative semantic components based on information representability and semantic reliability. This enables significant reductions in bandwidth usage while preserving core semantic content. Furthermore, SA–cGAN provides an interpretable and controllable mechanism to balance semantic fidelity and transmission cost, which is not achievable in conventional JSCC approaches. Therefore, the value of the proposed method lies not only in the performance of semantic similarity but also in its ability to achieve efficient, adaptive, and meaning-aware communication under resource constraints.

\begin{figure}
	\begin{center}
		\includegraphics[width=1\linewidth]{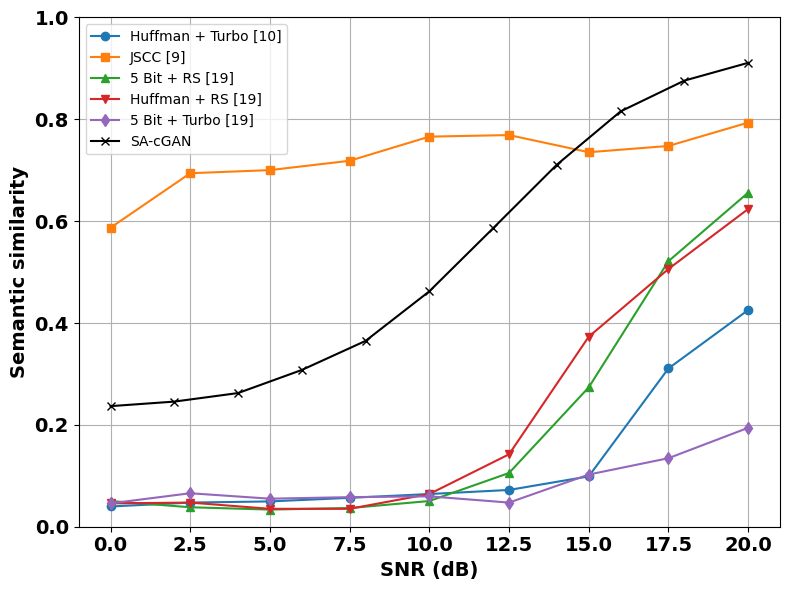}
        \vspace{-6mm}
		\caption[System model]{Semantic similarity comparison of the proposed SA-cGAN model with existing approaches}
        \vspace{-4mm}
		\label{Sentence Similarity}
	\end{center}
\end{figure}
\begin{figure}
    \centering
    \includegraphics[height=12cm,width=8.5cm]{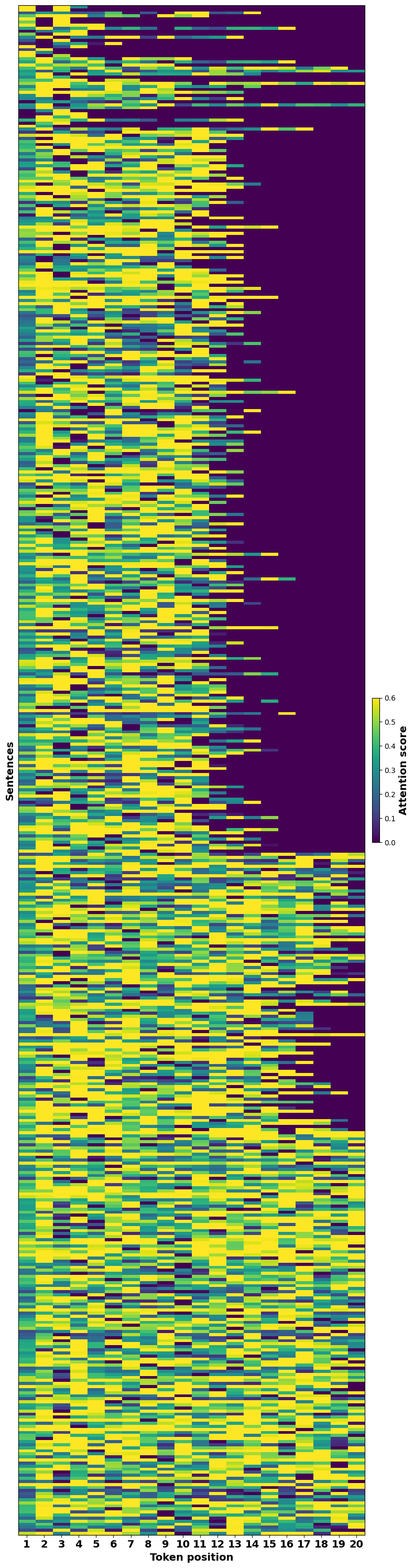}
    \vspace{-4mm}
    \caption{Semantic token importance heatmap for SA‑cGAN}
    \vspace{-6mm}
    \label{Heatmap}
\end{figure}

On the other hand, Fig.~\ref{Heatmap} illustrates the self‑attention behavior of the proposed SA‑cGAN model. The heatmap aggregates token wise importance scores for the 500 sentences (rows) and 20 token positions (columns), where bright yellow cells correspond to high attention scores, and dark cells correspond to tokens that are either padded or deemed semantically unimportant. Across the corpus, the model consistently assigns higher attention to a sparse subset of tokens in each sentence, while suppressing the padded tail region and many low information tokens, confirming that the learned selector performs effective semantic compression rather than uniformly weighting all positions. Another important observation is that high attention regions are spread over different positions and sentences instead of being confined to the first few tokens, which indicates that the self‑attention and sentence‑wise normalization steps have removed the positional bias and drive the selection based on contextual importance. This behavior matches the design objective of SA‑cGAN, the encoder automatically concentrates transmission resources on content words, entities, and key relations that are critical for reconstructing the meaning at the receiver, while allowing redundant or less informative words to be down weighted, when bandwidth is limited. Fig.~\ref{Token selection} shows the distribution of semantic compression ratios achieved by the SA-cGAN model across the evaluated corpus, where the ratio denotes the fraction of selected tokens relative to the total number of tokens in each sentence. The histogram shows that a large portion of sentences exhibit high selection ratios, with \(\approx\) 400--450 sentences retaining more than 80\% of tokens (selection ratio $> 0.8$). This behavior corresponds to a conservative semantic adaptation regime, where distortion criticality is high and therefore require preserving most semantic elements to maintain semantic fidelity. A substantial number of sentences \(\approx\) 300--350 samples fall within the moderate compression ratio range of 0.6--0.8 (i.e., 20\%--40\% compression), indicating balanced semantic adaptation. In this regime, the model successfully removes redundant or low importance tokens while retaining core semantics, achieving meaningful bandwidth savings without compromising semantic integrity. Around \(\approx\) 150--300 sentences exhibit compression ratios in the 0.4--0.6 (i.e., 40\%--60\% compression) range, reflecting more aggressive pruning enabled by higher information representability and lower distortion criticality. Only a small fraction of sentences, fewer than \(\approx\) 60--70 samples experience highly aggressive compression with selection ratios below 0.4 (i.e., $>$ 60\% compression), demonstrating that the proposed model rarely discards semantics excessively. This distribution confirms the model’s capability for adaptive bandwidth allocation, longer sentences containing redundant tokens with high information representability are compressed more aggressively, achieving reductions of up to 40--60\%, whereas shorter sentences with high distortion criticality preserve nearly all tokens to ensure semantic fidelity. In general, the histogram validates the proposed framework’s ability to dynamically trade-off bandwidth efficiency and semantic fidelity based on information representability and distortion criticality.
\begin{figure}
	\begin{center}
		\includegraphics[width=1\linewidth]{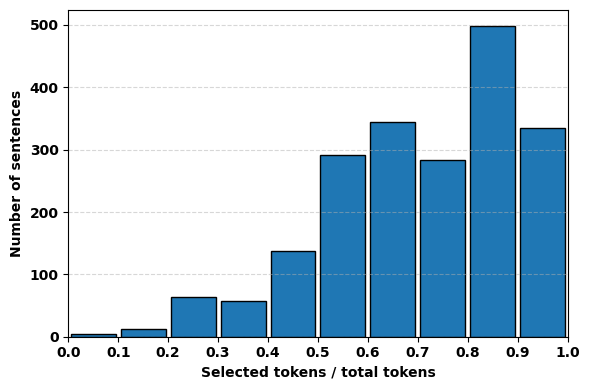}
        \vspace{-6mm}
		\caption[System model]{SA-cGAN semantic compression ratio distribution}
        \vspace{-6mm}
		\label{Token selection}
	\end{center}
\end{figure}

Furthermore, Fig.~\ref{Performance components} presents an SNR dependent analysis of the four key components of the SA-cGAN SemCom pipeline. At low SNR (0~dB), semantic selection exhibits the highest effectiveness ($\approx$ 1.0), highlighting its critical role in prioritizing essential tokens when channel reliability is poor. The KG effectiveness also peaks in this regime, indicating that the knowledge graph is particularly valuable when received semantics are noisy or incomplete, as it compensates for semantic degradation through relational inference. As SNR improves from 6~dB to 12~dB, channel capacity and reconstruction quality also improve, increasing from roughly 0.2 to 0.5, while the effectiveness of semantic selection decreases (from $\approx$ 0.9 to 0.6), and the effectiveness of KG gradually declines. This trend reflects a reduced need for aggressive semantic pruning and repair under cleaner channel conditions. At high SNR (20~dB), the channel capacity and reconstruction quality dominate ($\approx$ 1.0), while the contributions of semantic selection and KG decrease to very low values. This behavior confirms that the relative importance of the knowledge graph decreases as the reliability of the channel improves, demonstrating the adaptive resource allocation of the system in different SNR regimes.
\begin{figure}
	\begin{center}
		\includegraphics[width=1\linewidth]{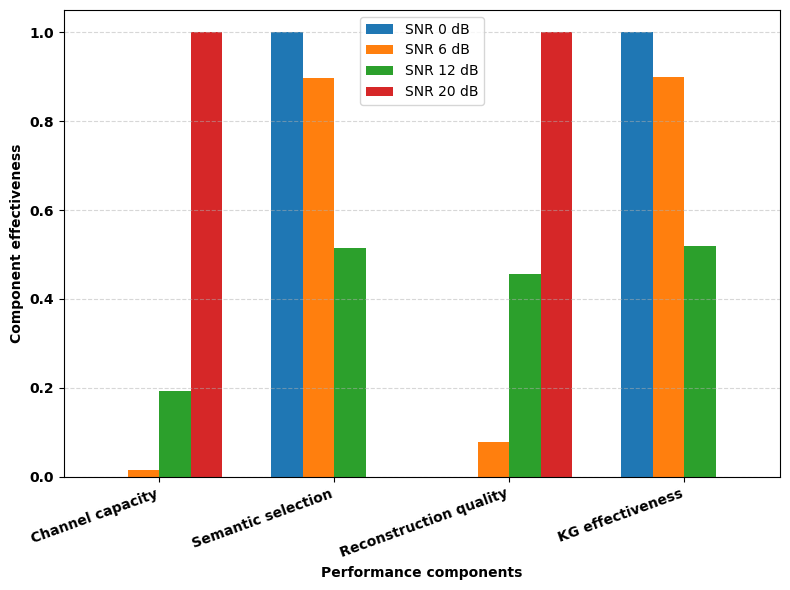}
        \vspace{-6mm}
		\caption[System model]{SNR dependent SA-cGAN component effectiveness}
        \vspace{-6mm}
		\label{Performance components}
	\end{center}
\end{figure}
Finally, Fig.~\ref{Loss} demonstrates the convergence behavior of the loss function for the proposed SA-cGAN model. The loss decreases rapidly in the initial iterations, indicating efficient learning of coarse semantic representations, followed by a gradual stabilization phase reflecting fine-grained refinement. The smooth and stable convergence without divergence demonstrates the effectiveness of the training process. This behavior confirms the robustness of the proposed model for reliable SemCom under practical channel conditions.
\begin{figure}
	\begin{center}
		\includegraphics[width=1\linewidth]{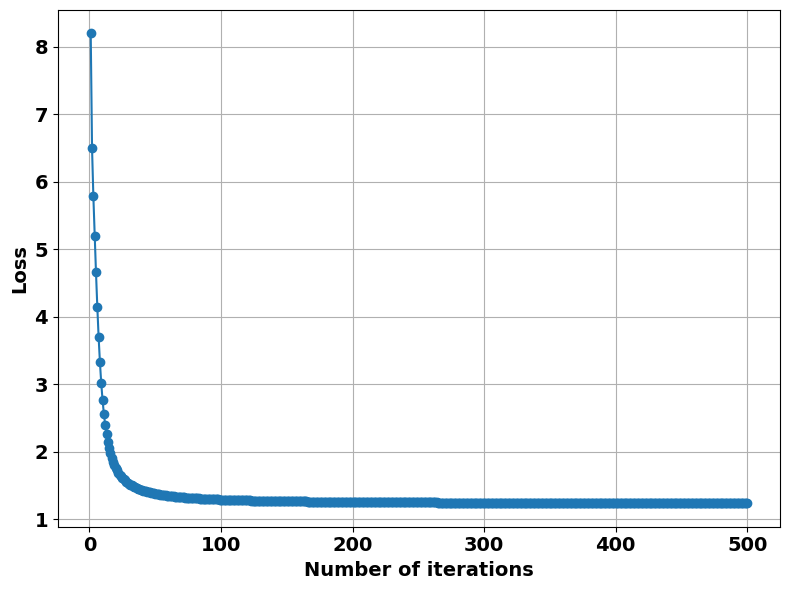}
        \vspace{-6mm}
		\caption[System model]{Training loss behavior of the proposed SA–cGAN model over 500 iterations}
        \vspace{-6mm}
		\label{Loss}
	\end{center}
\end{figure}

Overall, these comprehensive results establish that while SA--cGAN may not always maximize syntactic BLEU metrics, particularly at low SNR, it significantly outperforms existing schemes in semantic oriented metrics, including semantic similarity, semantic accuracy, and semantic completeness. This aligns with the intended design philosophy of SemCom, where preserving meaning under resource and channel constraints is more critical than reconstructing exact symbol sequences. The observed improvements demonstrate the effectiveness of the proposed self-attention extractor, KG-driven semantic representation, and cGAN-powered adaptive transmission strategy in achieving robust, resource-efficient SemCom over noisy channels.
\vspace{-4mm}
\section{Conclusion}
In this work, we proposed an SA–cGAN framework for SemCom that explicitly balances the trade-off between distortion criticality and information representability under varying channel conditions. Unlike conventional symbol-centric systems, the proposed approach adaptively selects and transmits semantic representations according to their contextual importance and vulnerability to channel distortion, enabling efficient utilization of bandwidth. The results show that although SA–cGAN achieves a lower syntactic fidelity (BLEU) performance in low-SNR regimes, it significantly outperforms both traditional and JSCC-based schemes at high SNR. The proposed model consistently achieves high semantic similarity across varying channel conditions, along with progressive improvements in semantic accuracy and semantic completeness metric. These results confirm that SA–cGAN effectively preserves semantic content under severe noise by prioritizing distortion critical semantics in conservative regimes and exploiting higher information representability in favorable channel conditions. Furthermore, the semantic compression analysis reveals adaptive bandwidth allocation behavior, where sentences with high distortion criticality retain most tokens, while highly representable sentences undergo aggressive semantic compression without loss of meaning. The stable convergence of the training loss validates the robustness of the adversarial learning process and the effectiveness of the integrated self-attention and knowledge-graph guidance. Overall, SA-cGAN integrated KG assisted semantic extraction and SNR aware adaptive transmission, enables robust and resource efficient communication that aligns with 6G goals, demonstrating clear advantages over existing symbol-centric schemes and highlighting the transformative potential of semantic aware transmission strategies. Future work will focus on extending the proposed SA–cGAN framework to multi-user and multi-modal SemCom scenarios, including joint processing of textual, visual, and sensor data. Moreover, investigating online adaptive semantic encoding and reinforcement learning based resource allocation under dynamically varying channel conditions represents a promising direction for further enhancing system robustness and efficiency.
\vspace{-4mm}

\end{document}